\documentclass[lettersize,onecolumn]{IEEEtran}
\usepackage{amsmath,amsfonts}
\usepackage{algorithmic}
\usepackage{algorithm}
\usepackage{array}
\usepackage[caption=false,font=normalsize,labelfont=sf,textfont=sf]{subfig}
\usepackage{textcomp}
\usepackage{stfloats}
\usepackage{url}
\usepackage{verbatim}
\usepackage{graphicx}
\usepackage{cite}
\usepackage{threeparttable}
\usepackage{color}
\usepackage{amsthm}
\usepackage{amssymb}
\usepackage{blkarray}
\usepackage{booktabs}
\usepackage{caption}
\usepackage{multirow}
\newcommand{\F}{\mathbb{F}}

\newtheorem{theorem}{Theorem}

\newtheorem{definition}{Definition}
\newtheorem{exmpl}{Example}
\newtheorem{lemma}{Lemma}

\newtheorem{remark}[theorem]{Remark}
\newtheorem{construction}{Construction}

\begin{document}

\title{The Asymmetric Hamming Bidistance and Distributions over Binary Asymmetric Channels}

\author{Shukai Wang, Cuiling Fan, Chunming Tang and Zhengchun Zhou}

\maketitle

\begin{abstract}
The binary asymmetric channel is a  model for practical communication systems where the error probabilities for symbol transitions $0\rightarrow 1$ and $1\rightarrow0$ differ substantially. 
In this paper, we introduce the notion of asymmetric Hamming bidistance (AHB) and its two-dimensional distribution, 
which separately captures directional discrepancies between codewords.
This finer characterization enables a more discriminative analysis of decoding the error probabilities for maximum-likelihood decoding (MLD), 
particularly when conventional measures, such as weight distributions and existing discrepancy-based bounds, fail to distinguish code performance. 
Building on this concept, we derive a new upper bound on the average error probability for binary codes under MLD and show that, in general, 
it is incomparable with the two existing bounds derived by Cotardo and Ravagnani (IEEE Trans. Inf. Theory, 68 (5), 2022).
To demonstrate its applicability, we compute the complete AHB distributions for several families of codes, 
including two-weight and three-weight projective codes (with the zero codeword removed) via strongly regular graphs and 3-class association schemes, 
as well as nonlinear codes constructed from symmetric balanced incomplete block designs (SBIBDs).
\end{abstract}

\begin{IEEEkeywords}
Binary asymmetric channel, asymmetric Hamming bidistance, strongly regular graph, $3$-class association scheme, SBIBD.
\end{IEEEkeywords}

\section{Introduction}\label{sec:1}

\IEEEPARstart{T}{he} study of the {\it binary asymmetric channel}  (BAC), initiated in the 1950s \cite{S1955}, 
centers on a communication model in which the two transmitted symbols $\{0,1\}$ are characterized by distinct crossover probabilities. 
As a foundational yet practically significant construct in information theory and digital communications, the BAC captures inherent directional asymmetries in error susceptibility that are not represented in symmetric channel models. 
Such asymmetries are observed in various practical systems -- including optical links \cite{optical2007}, flash memory devices \cite{CGO1999,CSB2010}, and neuroscience \cite{CIM2013,ZJF2023,SSL2026,ZWC2025} -- where the probability of a $0\rightarrow1$ error often substantially differs from that of a $1\rightarrow0$ error. 
This structural disparity necessitates a departure from classical symmetric analytical frameworks and motivates the development of specialized coding schemes, decoding rules, and performance bounds tailored to asymmetric conditions 
(see \cite{P2016,Q2019,QCR2018,rep2010} and the references therein).

Specifically, the BAC is a discrete memoryless channel defined over $\F_2=\{0,1\}$ with transition probabilities given by
\begin{equation}\label{def-P}
\begin{split}
   Pr(1|0)=p, &~~Pr(0|0)=1-p, \\
   Pr(0|1)=q, &~~Pr(1|1)=1-q,
\end{split}
\end{equation}
where $Pr(a|b)$ denotes the probability of receiving $a$ if $b$ was transmitted.
The parameters $p$ and $q$ are drawn from the following set \cite{QCR2018}:
\begin{equation*}
\mathcal{T}=\{(p,q)\in[0,1]^2:p\le q,~p+q<1\}.
\end{equation*}
As reflected in $\mathcal{T}$, the BAC generalizes both the \emph{binary symmetric channel} ($p=q$) and the \emph{Z-channel} ($p=0$).
In this paper, we focus on the regime $0<p\le q<1/2$.

Indeed, numerous studies on asymmetric channels have centered on coding properties \cite{CR1979} and on the design of codes with specific attributes, 
targeting either general asymmetric channels or the Z-channel, as discussed in \cite{GD2012, CR1979, TK1981, MGK1998, XL2005}. 
Despite the merits of the codes examined in these works, the decoding metrics they employ are generally not suitable for {\it maximum likelihood decoding} (MLD) \cite{QCR2018}. 
In response, the authors of \cite{QCR2018} focused on general binary asymmetric channels -- excluding the binary symmetric channel and the Z-channel -- from an MLD perspective.
They further explored the channel equivalence problem via the so-called BAC-function (see \cite[Definition 5]{QCR2018}).

MLD is of fundamental importance in digital communication systems, as it provides the optimal decision rule for minimizing the probability of decoding error when codewords are equally likely to be transmitted. 
The key performance metric of interest for MLD is the {\it average error probability}, which quantifies the overall reliability of the communication link. 
However, the exact computation of the average error probability is generally intractable due to the exponential growth of the decision regions with code length and the dependence on the specific code structure. 
Instead, analytical performance evaluation often relies on the union bound based on {\it pairwise error probabilities}, providing a manageable yet informative upper bound on the error rate.

As early as the 1960s, the decoding error probability of the binary symmetric channel was studied in relation to {\it weight distributions} of codes\cite{book1968}. 
Subsequent research has expanded and refined this line of inquiry, yielding numerous relevant results (see, e.g., \cite{P1994,DCC2006} and references therein). 
For the binary asymmetric channel (BAC), investigations into decoding error probability have also been carried out \cite{rep2010,GR2022,QCR2018}. 
In particular, building on partial results from \cite{QCR2018}, the authors of \cite{GR2022} introduced a channel parameter $\gamma = \log_{(q/(1-p))}\big(p/(1-q)\big)$ 
and defined two notions of discrepancy between binary vectors, denoted by $\delta_r$ and $\hat{\delta}_r$ respectively. 
Using these measures together with classical weight distributions, they derived two distinct upper bounds on the average error probability of maximum-likelihood decoding (see Lemma \ref{lem:bound}).
However, these bounds exhibit certain limitations: when two codes share identical weight distributions and minimum discrepancy values, the bounds fail to distinguish their performance. 
To overcome this drawback, a more refined characterization of binary codes is required.

In this paper, we introduce a new dissimilarity measure for binary codewords ${\bf x}$ and ${\bf y}$, termed the asymmetric Hamming bidistance (see Definition \ref{AHB}), denoted as
\[d_A({\bf x},{\bf y})=(d_{10}({\bf x},{\bf y}),d_{01}({\bf x},{\bf y})),\]
where $d_{ab}({\bf x},{\bf y})=|\{i:x_i=a, y_i=b\}|$. 
This measure is a two-dimensional vector whose components count the number of positions where the symbols differ in each direction. 
In the context of the binary asymmetric channel, the conventional Hamming distance fails to capture the directional asymmetry of error probabilities, 
as it only accounts for the total number of differing symbols. 
In contrast, $d_A$ provides a finer-grained characterization of the channel's asymmetric behavior, which in certain cases allows for a more accurate analysis of decoding error probabilities. 
By leveraging the two-dimensional distance distribution derived from $d_A$, a bound with enhanced discriminative power can be established, thereby improving the performance estimation of codes.
Furthermore, we compare our bounds with those obtained in \cite{GR2022} using the discrepancy measures $\delta_r$ and $\hat\delta_r$, and clarify the distinct advantages offered by the bidistance approach. Generally, our bound and the known two discrepancy-based bounds are incomparable.

However, completely determining the two-dimensional distribution of $d_A$ for general binary codes is a highly challenging task, 
as the computational complexity grows rapidly with code length and codebook size, often rendering exact analysis infeasible. 
Therefore, another key contribution of this work is the complete characterization of the two-dimensional distribution of the asymmetric hamming bidistance for several classes of few-weight codes, and some special nonlinear codes constructed from symmetric balanced incomplete block designs (SBIBDs). 
These results can provide a reliable theoretical basis for analyzing the performance of such codes in asymmetric channels.

The remainder of this paper is organized as follows.
Section \ref{sec:2} reviews the necessary preliminaries, including two known bounds on the average decoding error probability from\cite{GR2022}, and a discussion of their limitations. 
In Section \ref{sec:3}, we introduce the asymmetric Hamming bidistance (AHB) and its distribution, derive a new upper bound, and compare it with the existing bounds.
Section \ref{sec:4} is devoted to the computation of AHB distributions for two-weight and three-weight projective codes (excluding the zero codeword), 
using strongly regular graphs and 3-class association schemes, respectively.
Section \ref{sec:6} extends this analysis to some SBIBD-derived nonlinear codes. Finally, Section \ref{sec:7} concludes the paper.

\section{Preliminaries} \label{sec:2}

In this section, we recall several definitions and preliminary results that will be used throughout the paper, 
including binary codes, the maximum likelihood decoding (MLD) and the two known bounds for the average error probability of codes.
Throughout the paper, $\F_2=\{0,1\}$ is the binary field, and $n\geq2$ is an integer. 

\subsection{Binary Codes}
A binary code of length $n$ is a subset $C\subseteq\F_2^n$, whose elements are called {\it codewords}. 
For any codeword ${\bf c}=(c_1,c_2,\ldots,c_{n})\in C$, the set of its nonzero positions $\{1\le i \le n:c_i=1\}$ is called the \emph{support set} of ${\bf c}$, denoted as ${\rm supp}(\mathbf{c})$, 
and the size of ${\rm supp}(\mathbf{c})$ is called the {\it weight} of ${\bf c}$, denoted as ${\rm wt}(\mathbf{c})$. 
The {\it Hamming distance} between ${\bf x},{\bf y}\in\F_2^n$ is defined to be $d_H({\bf x},{\bf y})=wt({\bf x}+{\bf y})$. 

Let $A_i$ be the number of codewords in $C$ with weight $i$ for $0\leq i\leq n$.
The \emph{weight enumerator} of $C$ is defined by
$$A_0+A_1z+A_2z^2+\cdots+A_nz^n.$$
The sequence $(A_0,A_1,A_2,\ldots,A_n)$ is the \emph{weight distribution} of $C$.
A code $C$ is said to be a $t$-weight code if the number of nonzero $A_i$ in $(A_1,A_2,\ldots,A_n)$ is equal to $t$.

If $C$ is a $k$-dimensional linear subspace of $\F_2^n$, then $C$ is called a $[n,k,d]$ linear code, with $d=\min\limits_{\forall{\bf x}\neq{\bf y}\in C}d_H({\bf x},{\bf y})$. 
The {\it dual code} of $C$ is defined to be the orthogonal subspace $C^{\bot}$ of $C$ with respect to the Euclidean inner product, i.e., 
$$C^{\bot}=\{\mathbf{c}^{\bot}\in \F_2^n:\mathbf{c}^{\bot}\cdot \mathbf{c}=0 \text{ for all } \mathbf{c}\in C\}.$$
Clearly, the dimension of $C^{\bot}$ is $n-k$. 
A linear code $C$ is said to be \emph{projective} if $d(C^{\bot})\geq 3$.

\subsection{Maximum Likelihood Decoding and Error Probability}

Let $C\subseteq \F_2^n$ be a binary code. 
The conditional probability $Pr({\bf y}|{\bf x})$, often termed the {\it likelihood function}, 
quantifies the probability of observing the received vector $\mathbf{y}=(y_1,y_2,\ldots,y_n) \in \F_2^n$ given that the codeword $\mathbf{x}=(x_1,x_2,\ldots,x_n)\in C$ was transmitted. 
For a discrete memoryless channel (DMC), this probability factorizes as:
\begin{equation}\label{eq:1}
Pr(\mathbf{y}|\mathbf{x})=\prod_{i=1}^n Pr(y_i|x_i),
\end{equation}
where $Pr(y_i|x_i)$ denotes the channel probability for a single symbol. 
This factorization follows from the memoryless property: the channel output at time $i$ depends only on the input at time $i$, and not on previous or future transmissions.

In the decoding process, a natural way to decode a received message ${\bf y}$ is to return the unique codeword ${\bf x}\in C$ that maximizes $Pr({\bf y}|{\bf x})$, or otherwise to return a ``Failure" message. 
This {\it maximum likelihood} (ML) decoder is referred to as the standard ML decoder, whose formal definition is given as follows:
\begin{definition}
For a code $C\subseteq \F_2^n$, the maximum likelihood decoder is the function $\mathcal{D}_C:\F_2^n \rightarrow C\cup \{\mathbf{f}\}$ defined by
\begin{equation*}
\mathcal{D}_C(\mathbf{y})=\left\{
\begin{array}{cl}
\mathbf{x}, & \text{ if $\mathbf{x}$ is the unique codeword that maximizes } Pr(\mathbf{y}|\mathbf{x}),\\
\mathbf{f}, & \text{ otherwise,}
\end{array}\right.
\end{equation*}
where $\mathbf{f}\notin \F_2^n$ denotes a failure message.
 \end{definition} 

The analysis of error probability in maximum likelihood decoding is fundamental to the design of communication systems.  
The {\it pairwise error probability} $P_2({\bf x}\rightarrow {\bf x}')$ is defined as the probability that the ML decoder prefers codeword ${\bf x}'$ over the transmitted codeword ${\bf x}$: 
\begin{equation}\label{eq:pairwise}
P_2(\mathbf{x}\rightarrow \mathbf{x}')=\sum_{\mathbf{y}\in V} Pr(\mathbf{y}|\mathbf{x}),
\end{equation}
where $V=\{\mathbf{y}\in \F_2^n: Pr({\bf y}|{\bf x}') \ge Pr({\bf y}|{\bf x})\}.$
Assuming equiprobable transmission of codewords, i.e., $Pr({\bf x})=1/|C|$ for every ${\bf x}\in C$, the {\it average error probability} of the ML decoder for the code $C$ is defined as
\begin{equation}\label{eq:eorr_prop}
\begin{aligned}
P_e(C)&=\frac{1}{|C|}\sum_{\mathbf{x}\in C}\sum_{\substack{\mathbf{y}\in \F_2^n \\ \mathcal{D}_C(\mathbf{y})\ne \mathbf{x}}}Pr(\mathbf{y}|\mathbf{x})\\
&=1-\frac{1}{|C|}\sum_{\mathbf{x}\in C}\sum_{\substack{\mathbf{y}\in \F_2^n \\ \mathcal{D}_C(\mathbf{y})= \mathbf{x}}}Pr(\mathbf{y}|\mathbf{x}).
\end{aligned}
\end{equation}

Exact computation of $P_e$  is generally intractable due to the exponential growth of decision regions with code length $n$. 
A standard analytical approach employs the union bound based on pairwise error probabilities:
\begin{align}\label{bound-PEP}	
P_e(C)\leq \frac{1}{|C|}\sum\limits_{{\bf x}\neq {\bf x}'\in C}P_2({\bf x}\rightarrow {\bf x}').
\end{align}

While the union bound in (\ref{bound-PEP}) reduces the problem to pairwise error probabilities $P_2({\bf x}\rightarrow {\bf x}')$, exact evaluation of $P_2$ remains nontrivial.  
In symmetric channels, $P_2$ depends on the Hamming distance; for linear codes, it is related to the weight distribution. 
This connection has motivated extensive research on the weight distributions of linear codes.  
In asymmetric channels, however, $P_2$ depends separately on the directional distances $d_{01}$ and $d_{10}$, and no simple closed form exists now. 
This difficulty motivates the development of tractable bounds that better capture directional asymmetry.

Several recent works have proposed discrepancy-based bounds on $P_e(C)$ for asymmetric channels. 
In the next subsection, we review these bounds.

\subsection{Discrepancy-based Bounds on $P_e(C)$}

Let $\gamma=\log_{\frac{q}{1-p}}(\frac{p}{1-q})$ denote the channel parameter introduced in \cite[Notation III.1]{GR2022}.
Using this parameter, the authors of \cite{GR2022} defined two discrepancy measures between binary vectors  $\mathbf{x},\mathbf{y}\in \F_2^n$: 
\begin{equation*}
\delta_\gamma(\mathbf{x},\mathbf{y})=\gamma d_{10}(\mathbf{x},\mathbf{y})+d_{01}(\mathbf{x},\mathbf{y}),
\end{equation*}
referred to as the  \emph{discrepancy} between $\mathbf{x},\mathbf{y}$, and  
\begin{equation*}
\hat{\delta}_\gamma(\mathbf{x},\mathbf{y})=\delta_\gamma(\mathbf{x},\mathbf{y})-wt(\mathbf{x})(\gamma-1),
\end{equation*}
referred to as the \emph{symmetric discrepancy} between $\mathbf{x},\mathbf{y}$.
These give rise to two fundamental code parameters: the \emph{minimum discrepancy} $\delta_\gamma(C)$ and the \emph{minimum symmetric discrepancy} $\hat\delta_\gamma(C)$  of a binary code $C$, defined respectively as
\begin{equation*}
  \begin{aligned}
    \delta_\gamma(C) & =\min\big\{\delta_\gamma(\mathbf{c}_1,\mathbf{c}_2):\mathbf{c}_1,\mathbf{c}_2\in C,\mathbf{c}_1\neq \mathbf{c}_2\big\},\\
    \hat{\delta}_\gamma(C) & =\min\big\{\hat{\delta}_\gamma(\mathbf{c}_1,\mathbf{c}_2):\mathbf{c}_1,\mathbf{c}_2\in C,\mathbf{c}_1\neq \mathbf{c}_2\big\}.\\
  \end{aligned}
\end{equation*}

When combined with conventional weight distributions, these parameters lead to two incomparable upper bounds on the average error probability of codes.
\begin{lemma}\cite{GR2022}\label{lem:bound}
Keeping the notations above, let $C\subseteq \F_2^n$ be a code.
Then we have
\begin{equation*}
    P_e(C)\le  1-\frac{1}{|C|}\sum_{j=0}^n A_j \sum_{i=0}^n (1-q)^i (1-p)^{n-i} \cdot \sum_{s\in \mathcal{S}\left(\frac{\delta_\gamma(C)+(\gamma-1)(i-j)}{2}\right)}\left(\frac{q}{1-p}\right)^s \lambda(i,j,s),
\end{equation*}
and
\begin{equation*}
    P_e(C)\le 1-\frac{1}{|C|}\sum_{j=0}^n A_j \sum_{i=0}^n (1-q)^i (1-p)^{n-i} \cdot \sum_{s\in \mathcal{S}\left(\frac{\hat{\delta}_\gamma(C)+i(\gamma-1)}{2}\right)}\left(\frac{q}{1-p}\right)^s \lambda(i,j,s),
\end{equation*}
where $\lambda(i,j,s)$ is the number of binary vectors $\mathbf{y}\in \F_2^n$ of weight $i$ and satisfy $\delta_\gamma(\mathbf{y},\mathbf{x})=s$ with a codeword $\mathbf{x}\in C$ of weight $j$,
and $\mathcal{S}(h)=\{0\le s< h :s =a+\gamma b,~a,b \in \mathbb{N}\}$.
\end{lemma}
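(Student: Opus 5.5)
We prove both inequalities by bounding from below the probability of correct decoding. Write $\theta=\frac{q}{1-p}$. For any $\mathbf{x},\mathbf{y}$ with ${\rm wt}(\mathbf{y})=i$, count positions: $d_{01}(\mathbf{y},\mathbf{x})$ is the number of 1s of $\mathbf{x}$ flipped to 0, and $d_{10}(\mathbf{y},\mathbf{x})$ the number of 0s flipped to 1. Then
\[
Pr(\mathbf{y}|\mathbf{x})=(1-q)^{i}(1-p)^{n-i}\,\theta^{d_{01}(\mathbf{y},\mathbf{x})}\Big(\frac{p}{1-q}\Big)^{d_{10}(\mathbf{y},\mathbf{x})}.
\]
One should double-check the exponents by tracking the numbers of $0\to0$, $0\to1$, $1\to0$ and $1\to1$ positions. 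Since $\frac{p}{1-q}=\theta^{\gamma}$ by definition of $\gamma$, this gives
\[
Pr(\mathbf{y}|\mathbf{x})=(1-q)^i(1-p)^{n-i}\theta^{\delta_\gamma(\mathbf{y},\mathbf{x})},
\]
with the order of arguments fixed to match the paper's convention. Because $0<\theta<1$ (as $p+q<1$), maximizing likelihood is the same as minimizing $\delta_\gamma(\mathbf{y},\cdot)$ over $C$. Hence $\mathcal{D}_C(\mathbf{y})=\mathbf{x}$ exactly when $\mathbf{x}$ is the unique minimizer.

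\textbf{Key sufficient condition.} We find a condition on $s=\delta_\gamma(\mathbf{y},\mathbf{x})$ that guarantees unique minimality. Take $\mathbf{x}'\ne\mathbf{x}$ in $C$ and try a triangle-type inequality for $\delta_\gamma$. Coordinatewise, the contributions of $\mathbf{y}\to\mathbf{x}$ and $\mathbf{y}\to\mathbf{x}'$ add up to at least the contribution between $\mathbf{x}$ and $\mathbf{x}'$, but the weights $\gamma$ and $1$ attach asymmetrically. The correction is governed by weights: $d_{10}-d_{01}$ between two vectors equals the difference of their weights. Using this identity, one gets
\[
\delta_\gamma(\mathbf{y},\mathbf{x})+\delta_\gamma(\mathbf{y},\mathbf{x}')\ \ge\ \delta_\gamma(\mathbf{x},\mathbf{x}')-(\gamma-1)\big(i-j\big)
\]
(sign to be fixed in the computation), with $j={\rm wt}(\mathbf{x})$. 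Bounding $\delta_\gamma(\mathbf{x},\mathbf{x}')\ge\delta_\gamma(C)$ then shows: if
\[
s<\frac{\delta_\gamma(C)+(\gamma-1)(i-j)}{2},
\]
then $\delta_\gamma(\mathbf{y},\mathbf{x}')>s$ for every $\mathbf{x}'\ne\mathbf{x}$, so $\mathbf{y}$ decodes correctly. For the second bound we instead use $\hat\delta_\gamma(\mathbf{x},\mathbf{x}')=\delta_\gamma(\mathbf{x},\mathbf{x}')-(\gamma-1)j$, which absorbs the $j$-dependence and leaves the threshold $\frac{\hat\delta_\gamma(C)+i(\gamma-1)}{2}$. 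Getting this inequality with exactly the right signs is the main obstacle. I would verify it by a per-coordinate case check over the eight patterns $(y_k,x_k,x'_k)$ and then sum over coordinates.

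\textbf{Summing up.} The probability of correct decoding given $\mathbf{x}$ of weight $j$ is at least the sum of $Pr(\mathbf{y}|\mathbf{x})$ over $\mathbf{y}$ satisfying the condition. Grouping these $\mathbf{y}$ by weight $i$ and by value $s$, the count of each group is $\lambda(i,j,s)$. The value $s$ always has the form $a+\gamma b$ with $a,b\in\mathbb{N}$, so it ranges over $\mathcal{S}(h)$ at the appropriate threshold $h$. This yields
\[
\sum_{i}(1-q)^i(1-p)^{n-i}\sum_{s\in\mathcal S(h)}\theta^{s}\lambda(i,j,s).
\]
This lower bound depends on $\mathbf{x}$ only through $j$, provided $\lambda$ is interpreted per codeword of weight $j$. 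Averaging over $C$ collects codewords by weight, giving the factor $A_j/|C|$. Subtracting from $1$ via \eqref{eq:eorr_prop} gives both inequalities.
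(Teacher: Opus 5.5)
The paper gives no proof of this lemma; it is quoted from Cotardo--Ravagnani, so there is no in-paper argument to compare with. Your route works: write the likelihood as $(1-q)^i(1-p)^{n-i}\theta^{\delta_\gamma(\mathbf y,\mathbf x)}$, so ML decoding is minimum-discrepancy decoding; use a weight-corrected triangle inequality to get a region of guaranteed correct decoding; then group the received words by $(i,s)$. Your likelihood formula, the identification of the attainable $s$ with $\mathcal S(h)$, and the averaging step are all correct. Also, $\lambda(i,j,s)$ does not depend on which $\mathbf x$ of weight $j$ you take, so no special interpretation is needed.

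The one thing you must fix is the sign you left open, and it is not cosmetic. As displayed, with $-(\gamma-1)(i-j)$, the inequality is false whenever $i<j$ and $\gamma>1$. Take $n=2$, $\mathbf y=\mathbf x'=(0,0)$, $\mathbf x=(1,1)$: the left side is $2$, while the right side is $2\gamma+2(\gamma-1)=4\gamma-2$. With that sign you would also get different thresholds from the ones you state. The correct statement is
\[
\delta_\gamma(\mathbf y,\mathbf x)+\delta_\gamma(\mathbf y,\mathbf x')\ \ge\ \delta_\gamma(\mathbf x,\mathbf x')+(\gamma-1)(i-j)\ =\ \hat\delta_\gamma(\mathbf x,\mathbf x')+(\gamma-1)i .
\]
For a quick proof, first use $d_{10}(\mathbf u,\mathbf v)-d_{01}(\mathbf u,\mathbf v)={\rm wt}(\mathbf u)-{\rm wt}(\mathbf v)$ to get
\[
\delta_\gamma(\mathbf u,\mathbf v)=\tfrac{\gamma+1}{2}\,d_H(\mathbf u,\mathbf v)+\tfrac{\gamma-1}{2}\big({\rm wt}(\mathbf u)-{\rm wt}(\mathbf v)\big).
\]
Add this identity for the pairs $(\mathbf y,\mathbf x)$ and $(\mathbf y,\mathbf x')$, and apply $d_H(\mathbf y,\mathbf x)+d_H(\mathbf y,\mathbf x')\ge d_H(\mathbf x,\mathbf x')$, using $\gamma+1>0$. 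This gives the lower bound $\tfrac{\gamma+1}{2}d_H(\mathbf x,\mathbf x')+\tfrac{\gamma-1}{2}(2i-j-j')$, where $j'={\rm wt}(\mathbf x')$, and this equals the right-hand side above.

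Equivalently, combine the directed triangle inequality $\delta_\gamma(\mathbf x,\mathbf x')\le\delta_\gamma(\mathbf x,\mathbf y)+\delta_\gamma(\mathbf y,\mathbf x')$ with the swap identity $\delta_\gamma(\mathbf x,\mathbf y)=\delta_\gamma(\mathbf y,\mathbf x)+(\gamma-1)(j-i)$. Your proposed eight-pattern coordinate check confirms the $+$ sign as well. Six patterns give equality, and the patterns $(y_k,x_k,x'_k)=(0,1,1)$ and $(1,0,0)$ need only $\gamma\ge -1$.

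With the $+$ sign, $\delta_\gamma(\mathbf x,\mathbf x')\ge\delta_\gamma(C)$ yields exactly your first threshold. The bound $\hat\delta_\gamma(\mathbf x,\mathbf x')\ge\hat\delta_\gamma(C)$ yields exactly your second. The rest of your argument then goes through verbatim.
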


However, these bounds have certain limitations: if two binary codes share identical weight distributions and the same minimum discrepancy $\delta_\gamma(C)$ (or minimum symmetric discrepancy $\hat{\delta}_\gamma(C)$), 
the resulting upper bounds on $P_e(C)$ cannot distinguish their performance, as illustrated in the following example.

\begin{exmpl}\label{exp:1}
Let $p=0.1,q=0.15$, and consider two binary codes  
\begin{equation*}
\begin{aligned}
  C_1 & =\{(1,1,1,0,0,0),(0,1,1,1,0,0),(1,1,0,0,0,0)\},\\
  C_2 & =\{(1,1,1,0,0,0),(0,0,0,1,1,1),(1,1,0,0,0,0)\}.
\end{aligned}
\end{equation*}
Both codes clearly have the same weight distribution $(0,0,1,2,0,0,0)$.
The channel parameter is $\gamma \approx 1.1944$, and we obtain
\begin{equation*}
  \begin{aligned}
    \delta_\gamma(C_1)=\delta_\gamma(C_2) & =1,\\
    \hat{\delta}_\gamma(C_1)=\hat{\delta}_\gamma(C_2) & = 3-2 \gamma \approx 0.6112.
  \end{aligned}
\end{equation*}
By Lemma \ref{lem:bound}, the same bound $0.5435$ (applying both $\delta_\gamma$ and $\hat{\delta}_\gamma$) holds for both $P_e(C_1)$ and $P_e(C_2)$.
Nevertheless, the actual error probabilities are $P_e(C_1)=0.2328$ and $P_e(C_2)=0.101$. 
\end{exmpl} 
Example \ref{exp:1} clearly demonstrates that the discrepancy-based bounds may fail to reflect the true performance difference between codes when their weight distributions and minimum discrepancy values coincide.

To address this limitation, we introduce in the next section the asymmetric Hamming bidistance for binary vectors, 
which offers a more refined  characterization of binary codes. 
By further incorporating its two-dimensional bidistance distribution, we then derive a more discriminative bound on the decoding error probability. 
Furthermore, we show analytically that our bound and the existing discrepancy-based bounds are generally incomparable, thereby clarifying their respective roles in performance estimation.

\section{A New Bound on the Average Error Probability Based on Bidistance Distribution} \label{sec:3}

Building on the limitations of existing discrepancy-based bounds discussed previously, 
this section introduces a refined analytical framework for estimating the average error probability of maximum-likelihood decoding over the binary asymmetric channel. 
We first define a two-dimensional distance measure, the asymmetric Hamming bidistance, which separately accounts for the $0\rightarrow1$ and $1\rightarrow0$ error directions, 
together with its associated bidistance distribution for a binary code. 
Using these constructs, we then derive a new upper bound on the decoding error probability that explicitly incorporates the directional asymmetry of the channel. 
Compared with the classical union bound, this bound can further improve the discriminability between codes that share identical conventional weight distributions and minimum discrepancy values.
 
\subsection{Asymmetric Hamming Bidistance and Its Distribution}

\begin{definition}\label{AHB}
Let ${\bf x}=(x_1,x_2,\ldots,x_n)$ and ${\bf y}=(y_1,y_2,\ldots,y_n)$ be two binary vectors in $\F_2^n$. We define the asymmetric Hamming bidistance (AHB) between ${\bf x}$ and ${\bf y}$ as the ordered pair 
\[d_A({\bf x},{\bf y})=(d_{10}({\bf x},{\bf y}),d_{01}({\bf x},{\bf y})),\]
where $d_{10}(\mathbf{x},\mathbf{y})=|\{i:x_i=1,y_i=0\}|,~d_{01}(\mathbf{x},\mathbf{y})=|\{i:x_i=0,y_i=1\}|.$

The two components count, respectively, the number of positions in which a $1$ in ${\bf x}$ corresponds to a $0$ in ${\bf y}$, 
and the number of positions in which a $0$ in ${\bf x}$ corresponds to a $1$ in ${\bf y}$.
\end{definition}
Obviously the conventional Hamming distance can be recovered as $d_H({\bf x},{\bf y})=d_{10}({\bf x},{\bf y})+d_{01}({\bf x},{\bf y})$.

For a binary code $C\subseteq\F_2^n$, we define its {\it bidistance distribution} as the two-dimensional array ${\mathbb A}_C$ with elements
\[A(d_{10},d_{01})=|\{({\bf x},{\bf y})\in C\times C:d_{10}(\mathbf{x},\mathbf{y})=d_{10},d_{01}(\mathbf{x},\mathbf{y})=d_{01}\}|,\]
for all $d_{10},d_{01}\in\{0,1,\ldots,n\}$ and $d_{10}+d_{01}\leq n$. 
Here, $A(d_{10},d_{01})$ is called the {\it frequency} of the ordered pair $(d_{10},d_{01})$. 
Clearly $A(0,0)=|C|$, so we generally omit this case.
If the bidistance distribution ${\mathbb A}_C$ is sparse, i.e., contains many zero entries, it can also be represented as a multiset 
\[{\cal A}_C=\{(d_{10},d_{01})^{A(d_{10},d_{01})}: 0<d_{10}+d_{01}\leq n\},\]
where the notation $(\cdot,\cdot)^j$  indicates that element $(\cdot,\cdot)$ appears $j$ times.

\begin{remark}
$(1)$ In general, $d_A({\bf x},{\bf y})\neq d_A({\bf y},{\bf x})$; indeed, $d_{10}({\bf x},{\bf y})= d_{01}({\bf y},{\bf x})$.

$(2)$ $A(i,j)=A(j,i)$ for any $i,j\geq0$ and $i+j\leq n$.

$(3)$ The bidistance distribution ${\mathbb A}_C$ contains strictly more information than the classical weight enumerator, 
thereby enabling greater discriminability in asymmetric settings. 
For example, ${\mathbb A}_C$ contains sufficient information to derive both the asymmetric distance over the Z-channel \cite{CR1979} and the discrepancy $\delta_r$.

$(4)$ Especially, for a linear code $C$, its conventional weight distribution $\{A_i\}_{i=0}^n$ is obtained by summing over the antidiagonals:
\[A_i=\frac{1}{|C|}\sum\limits_{\substack{d_{10},d_{01}\geq0\\ d_{10}+d_{01}=i}} A(d_{10},d_{01}).\]
\end{remark}

We now illustrate the preceding definitions with a concrete example.
\begin{exmpl}\label{exp:1-1}
Let  $C_1$ and $C_2$ be the binary codes in Example \ref{exp:1}.
Their bidistance distributions  are shown as follows: 
$$\setlength{\arraycolsep}{2pt}
\mathbb{A}_{C_1}=\begin{pmatrix}
3 & 1 & 0 & 0 & 0 & 0 & 0 \\
1 & 2 & 1 & 0 & 0 & 0 & 0 \\
0 & 1 & 0 & 0 & 0 & 0 & 0 \\
0 & 0 & 0 & 0 & 0 & 0 & 0 \\
0 & 0 & 0 & 0 & 0 & 0 & 0 \\
0 & 0 & 0 & 0 & 0 & 0 & 0 \\
0 & 0 & 0 & 0 & 0 & 0 & 0
\end{pmatrix},~~
\mathbb{A}_{C_2}=\begin{pmatrix}
3 & 1 & 0 & 0 & 0 & 0 & 0 \\
1 & 0 & 0 & 0 & 0 & 0 & 0 \\
0 & 0 & 0 & 1 & 0 & 0 & 0 \\
0 & 0 & 1 & 2 & 0 & 0 & 0 \\
0 & 0 & 0 & 0 & 0 & 0 & 0 \\
0 & 0 & 0 & 0 & 0 & 0 & 0 \\
0 & 0 & 0 & 0 & 0 & 0 & 0
\end{pmatrix}.$$
Equivalently, we can use the following multisets for a simplified representation:
\begin{equation*}
\begin{aligned}
{\cal A}_{C_1}=\{(0, 1), (1, 0), (1, 1)^{ 2}, (1, 2), (2, 1)\},\\
{\cal A}_{C_2}=\{(0, 1), (1, 0), (2, 3), (3, 2), (3, 3)^{ 2}\}.
\end{aligned}
\end{equation*}
\end{exmpl}

\subsection{The New Union Bound on the Average Error Probability}

In this subsection, we first establish that the pairwise error probability $P_2({\bf x}\rightarrow {\bf x}')$ depends explicitly on the two components of the asymmetric Hamming bidistance $d_A({\bf x},{\bf x}')=(d_{10},d_{01})$. 
By incorporating this directional information, we can refine the classical union bound in terms of the bidistance distribution to obtain a more discriminative estimate of the average decoding error probability $P_e(C)$.

For a binary asymmetric channel with transition probabilities $(p,q)$, recall that the {\it log-likelihood ratio} (LLR) between two codewords ${\bf x}$ and ${\bf x}'$ given the received vector ${\bf y}$ is defined as
\begin{equation*}
LLR_{{\bf x},{\bf x}'}(\mathbf{y})=\log\left(\frac{Pr(\mathbf{y}|\mathbf{x})}{Pr(\mathbf{y}|\mathbf{x}')}\right),
\end{equation*}
which factorizes as
\begin{equation*}
LLR_{{\bf x},{\bf x}'}(\mathbf{y})=\sum\limits_{k=1}^n\log\left(\frac{Pr(y_k|x_k)}{Pr(y_k|x_k')}\right),
\end{equation*}
for a discrete memoryless channel.
(The base of the logarithm here can be $2$ or any other real number greater than $1$.)
The pairwise error event ``${\bf x}$ is mistaken for ${\bf x}'$" occurs exactly when the ML decoder prefers ${\bf x}'$, i.e., when $LLR_{{\bf x},{\bf x}'}(\mathbf{y})\leq 0$. 
Hence 
\[P_2({\bf x}\rightarrow {\bf x}')=\sum_{LLR_{{\bf x},{\bf x}'}({\bf y})\leq 0}Pr(\mathbf{y}|\mathbf{x}).\]

Let $d_A({\bf x},{\bf x}')=(d_{10},d_{01})$. Define $k_{10}$ as the number of positions, among the $d_{10}$ positions where $x_i=1$ and $x_i'=0$, 
in which the channel actually flips the transmitted $1$ to a received $0$. 
Similarly, define $k_{01}$ as the number of positions, among the $d_{01}$ positions, where $x_i=0$ and $x_i'=1$, in which the channel flips the transmitted $0$ to a received $1$. 
Formally, 
\begin{equation}\label{k_{01}}
	k_{10}=|\{i:x_i=1,x_i'=0,y_i=0\}|,~k_{01}=|\{i:x_i=0,x_i'=1,y_i=1\}|.
\end{equation}

Straightforward algebra shows the following result.
\begin{lemma}\label{equiv}
For a binary asymmetric channel with transition probabilities $p,q~(0<p\le q<1/2)$, the $LLR_{{\bf x},{\bf x}'}(\mathbf{y})\leq 0$ if and only if $k_{10}+k_{01}\geq \lceil\frac{d_{10}\gamma+d_{01}}{\gamma+1}\rceil$,
where $\gamma=\log_{\frac{q}{1-p}}(\frac{p}{1-q})$ is the channel parameter and $\lceil\cdot\rceil$ denotes the ceiling function.
\end{lemma}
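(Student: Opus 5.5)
The plan is to compute $LLR_{{\bf x},{\bf x}'}({\bf y})$ explicitly in terms of $d_{10},d_{01},k_{10},k_{01}$ using the factorization over coordinates. I would then show that it equals a positive constant times $(d_{10}\gamma+d_{01})-(\gamma+1)(k_{10}+k_{01})$. Since $k_{10}+k_{01}$ is an integer, the ceiling then appears automatically.

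\emph{Step 1 (per-coordinate contributions).} Coordinates with $x_i=x_i'$ contribute $\log 1=0$, so only the $d_{10}+d_{01}$ differing positions matter. Among the $d_{10}$ positions with $x_i=1,x_i'=0$, there are two cases. If $y_i=0$ (there are $k_{10}$ such positions), the ratio is $Pr(0|1)/Pr(0|0)=q/(1-p)$. If $y_i=1$ (the remaining $d_{10}-k_{10}$ positions), it is $(1-q)/p$. Similarly, among the $d_{01}$ positions with $x_i=0,x_i'=1$, there are again two cases. If $y_i=1$ ($k_{01}$ positions), the ratio is $p/(1-q)$. If $y_i=0$ ($d_{01}-k_{01}$ positions), it is $(1-p)/q$.

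\emph{Step 2 (rewrite with $\gamma$).} Put $a=\log\frac{1-p}{q}$ and $b=\log\frac{1-q}{p}$. Because $p+q<1$ we have $1-p>q$ and $1-q>p$, so $a>0$ and $b>0$. By the change of base formula,
\[
\gamma=\frac{\log\frac{p}{1-q}}{\log\frac{q}{1-p}}=\frac{-b}{-a}=\frac{b}{a},
\qquad\text{so } b=\gamma a .
\]
Summing the contributions from Step 1 gives
\[
LLR_{{\bf x},{\bf x}'}({\bf y})=-k_{10}a+(d_{10}-k_{10})b-k_{01}b+(d_{01}-k_{01})a .
\]
Substituting $b=\gamma a$ and collecting terms yields
\[
LLR_{{\bf x},{\bf x}'}({\bf y})=a\bigl[(d_{10}\gamma+d_{01})-(\gamma+1)(k_{10}+k_{01})\bigr].
\]

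\emph{Step 3 (conclusion).} Since $a>0$ and $\gamma+1>0$, the condition $LLR\le 0$ is equivalent to $k_{10}+k_{01}\ge \frac{d_{10}\gamma+d_{01}}{\gamma+1}$. Because $k_{10}+k_{01}$ is a nonnegative integer, this is in turn equivalent to $k_{10}+k_{01}\ge\lceil\frac{d_{10}\gamma+d_{01}}{\gamma+1}\rceil$.

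There is no real obstacle here. The only points needing care are the sign bookkeeping in Step 1, and checking that $a>0$ and $\gamma>0$, which is where the hypothesis on $(p,q)$ is used. The independence of the result from the choice of logarithm base follows because $\gamma$ is a ratio of logarithms, and the positive factor $a$ does not affect the sign.
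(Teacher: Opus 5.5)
Your proof is correct and follows essentially the same route as the paper: a coordinatewise computation of the LLR over the differing positions, rewritten via $\gamma$ as a fixed-sign logarithm times $(d_{10}\gamma+d_{01})-(\gamma+1)(k_{10}+k_{01})$, followed by the integrality of $k_{10}+k_{01}$ to pass to the ceiling. The only cosmetic difference is that you factor out the positive quantity $\log\frac{1-p}{q}$ and state $\gamma>0$ explicitly, whereas the paper factors out the negative quantity $\log\frac{q}{1-p}$ and flips the inequality.
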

\begin{proof}
In the binary asymmetric channel, only the positions where the two codewords differ contribute to the LLR, that is, the positions in Regions $R2$ and $R3$.
The position coordinates corresponding to $k_{10}$ and $k_{01}$ also lie in these regions by their definition.
\[
\begin{array}{c}
\mathbf{x}\\
\mathbf{x}'\\
\mathbf{y}
\end{array}
\overbrace{
\begin{array}{cccc}
  0 & 0 & 1 & 1\\
  0 & 0 & 1 & 1 \\
  0 & 1 & 0 & 1
\end{array}}^{R1}
\rule[-7.5mm]{0.6pt}{18mm} 
\overbrace{
\begin{array}{cc}
    1 & 1 \\
    0 & 0 \\
    0 & 1
\end{array}
}^{R2}
\underbrace{
\begin{array}{cc}
    0 & 0 \\
    1 & 1 \\
    0 & 1
\end{array}
}_{R3}
\]
For a single position, the likelihood ratio satisfies
\[
\frac{P(y_k\mid 0)}{P(y_k\mid 1)}=
\begin{cases}
    \dfrac{1-p}{q}, & y_k=0,\\[6pt]
    \dfrac{p}{1-q}, & y_k=1 .
\end{cases}
\]
Recall that \(d_{10}=d_{10}(\mathbf{x},\mathbf{x}')\), \(d_{01}=d_{01}(\mathbf{x},\mathbf{x}')\), and let $k_{10}$ and $k_{01}$ be the random variables defined in (\ref{k_{01}}).
Then  
\begin{align*}
	LLR_{{\bf x},{\bf x}'}(\mathbf{y})&=k_{01}\log\frac{p}{1-q}+(d_{01}-k_{01})\log\frac{1-p}{q}+k_{10}\log\frac{q}{1-p}+(d_{10}-k_{10})\log\frac{1-q}{p}\\
	&=(k_{01}+k_{10}-d_{10})\log\frac{p}{1-q}+(k_{01}+k_{10}-d_{01})\log\frac{q}{1-p}\\
	&=\big[(k_{01}+k_{10}-d_{10})\gamma+(k_{01}+k_{10}-d_{01})\big]\cdot \log\frac{q}{1-p}.
\end{align*}
Because \(0<p\le q<1/2\) implies \(\frac{q}{1-p}<1\), we have \(\log\frac{q}{1-p}<0\). 
Consequently, 
\[
\operatorname{LLR}_{\mathbf{x},\mathbf{x}'}(\mathbf{y})\le 0
\Longleftrightarrow
(k_{01}+k_{10}-d_{10})\gamma+(k_{01}+k_{10}-d_{01})\geq0.
\]
Since $k_{01}+k_{10}$ takes only integer values, rearranging the inequality yields the desired condition, thereby completing the proof.
\end{proof}

Due to the memoryless property of the channel and the fact that the position sets corresponding to $d_{01}$ and $d_{10}$ are disjoint, 
the random variables $k_{01} \sim \operatorname{Bin}(d_{01}, p)$ and $k_{10} \sim \operatorname{Bin}(d_{10}, q)$ are independent, 
each counting the actual directional errors in the respective differing positions of the two codewords. 
Applying the equivalence established in Lemma \ref{equiv}, we derive the following closed-form expression for the pairwise error probability.

\begin{lemma}\label{P_2-comp}
Define ${\cal R}_{(d_{10},d_{01})}=\{(i,j):0\leq i\leq d_{10},0\leq j\leq d_{01},i+j\geq \lceil(d_{10}\gamma+d_{01})/(\gamma+1)\rceil\}$. 
Then
\begin{equation}\label{P_2-form}
P_2(\mathbf{x}\rightarrow \mathbf{x}')=\sum_{(i,j)\in{\cal R}_{(d_{10},d_{01})}}\binom{d_{10}}{i}q^{i}(1-q)^{d_{10}-i}\cdot\binom{d_{01}}{j}p^{j}(1-p)^{d_{01}-j}.
\end{equation}
\end{lemma}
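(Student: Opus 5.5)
The plan is to start from the definition $P_2(\mathbf{x}\rightarrow \mathbf{x}')=\sum_{LLR_{{\bf x},{\bf x}'}({\bf y})\leq 0}Pr(\mathbf{y}|\mathbf{x})$. Note that $P_2$ is exactly the probability, under transmission of $\mathbf{x}$, of the event $\{LLR_{{\bf x},{\bf x}'}(\mathbf{y})\le 0\}$. By Lemma \ref{equiv}, this event coincides with $\{k_{10}+k_{01}\ge \lceil (d_{10}\gamma+d_{01})/(\gamma+1)\rceil\}$, where $k_{10},k_{01}$ are the functions of $\mathbf{y}$ defined in (\ref{k_{01}}). Hence
\[
P_2(\mathbf{x}\rightarrow\mathbf{x}')=\sum_{(i,j)\in{\cal R}_{(d_{10},d_{01})}}Pr(k_{10}=i,\ k_{01}=j\mid \mathbf{x}),
\]
since the pairs $(i,j)$ with $0\le i\le d_{10}$, $0\le j\le d_{01}$ exhaust the possible values of $(k_{10},k_{01})$. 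It therefore suffices to compute the joint law of $(k_{10},k_{01})$ given that $\mathbf{x}$ was sent.

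For this, partition the coordinates into $R1$ (where $x_k=x'_k$), $R2$ (where $x_k=1,x'_k=0$, of size $d_{10}$) and $R3$ (where $x_k=0,x'_k=1$, of size $d_{01}$). By the factorization (\ref{eq:1}), $Pr(\mathbf{y}|\mathbf{x})$ splits into a product of the marginals over $R1$, $R2$ and $R3$.

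Fix $i,j$ and sum $Pr(\mathbf{y}|\mathbf{x})$ over all $\mathbf{y}$ with $k_{10}=i$ and $k_{01}=j$, handling each region in turn:
\begin{itemize}
\item The $R1$ coordinates are unconstrained, so their factor sums to $1$.
\item On $R2$ the transmitted symbol is $1$, and exactly $i$ of the $d_{10}$ positions receive $0$. There are $\binom{d_{10}}{i}$ choices, each contributing $q^i(1-q)^{d_{10}-i}$ by (\ref{def-P}).
\item On $R3$ the transmitted symbol is $0$, and exactly $j$ of the $d_{01}$ positions receive $1$. There are $\binom{d_{01}}{j}$ choices, each contributing $p^j(1-p)^{d_{01}-j}$.
\end{itemize}
Multiplying the three factors gives the summand in (\ref{P_2-form}). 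This also confirms the independence and binomial laws asserted before the statement.

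There is no real obstacle. The only point needing care is that the event characterized in Lemma \ref{equiv} depends on $\mathbf{y}$ only through $(k_{10},k_{01})$, so summing over $\mathbf{y}$ can be regrouped by the value of this pair. This regrouping is also where the boundary convention matters: ties $Pr(\mathbf{y}|\mathbf{x}')=Pr(\mathbf{y}|\mathbf{x})$ are counted as errors, which matches the non-strict inequality $LLR\le 0$ and the ceiling in ${\cal R}_{(d_{10},d_{01})}$.
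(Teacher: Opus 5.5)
Your proof is correct and follows essentially the same route as the paper: use Lemma \ref{equiv} to turn the event $LLR_{\mathbf{x},\mathbf{x}'}(\mathbf{y})\le 0$ into $k_{10}+k_{01}\ge\lceil (d_{10}\gamma+d_{01})/(\gamma+1)\rceil$, sum out the $R1$ coordinates to $1$, and multiply the binomial laws on $R2$ and $R3$. The only difference is presentational. The paper cites the independence of $k_{10}\sim\operatorname{Bin}(d_{10},q)$ and $k_{01}\sim\operatorname{Bin}(d_{01},p)$ as a prior remark, whereas you derive the joint law directly by counting, which makes the argument slightly more self-contained.
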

\begin{proof}
For fixed $\mathbf{x}$ and $\mathbf{x}'$, let $V=\{\mathbf{y}\in \F_2^n:LLR_{{\bf x},{\bf x}'}({\bf y})\leq 0\}.$
By the proof of Lemma \ref{equiv}, the vectors in $V$ contain all possible values corresponding to the positions in Region $R1$.
That is, in the position region corresponding to $R1$, if $Pr(0|x_k)$ exists for some $k$, then $Pr(1|x_k)$ must exist, and vice versa.
Since $Pr(0|x_k)+Pr(1|x_k)=1$, we have
\begin{equation*}
\sum_{LLR_{{\bf x},{\bf x}'}({\bf y})\leq 0} Pr(\mathbf{y}|\mathbf{x}) =\sum_{LLR_{{\bf x},{\bf x}'}({\bf y})\leq 0} \prod_{k\in R_2 \cup R_3 }Pr(y_k|x_k).
\end{equation*}

Based on Lemma \ref{equiv} and the independence of $k_{10}$ and $k_{01}$, the pairwise error probability can be written as a double summation over the values of $k_{10}$ and $k_{01}$ that satisfy the inequality condition. 
Specifically, 
\[P_2({\bf x}\to{\bf x}')=Pr\big(k_{10}+k_{01}\geq \lceil\tau\rceil\big),~\tau=\frac{d_{10}\gamma+d_{01}}{\gamma+1}.\]
Since $k_{10}$ and $k_{01}$ are independent binomial random variables, 
\begin{align*}
P_2({\bf x}\to{\bf x}')&=\sum_{(i,j)\in{\cal R}_{(d_{10},d_{01})}}Pr(k_{10}=i)\cdot Pr(k_{01}=j)\\
&=\sum_{(i,j)\in{\cal R}_{(d_{10},d_{01})}}\binom{d_{10}}{i}q^{i}(1-q)^{d_{10}-i}\cdot\binom{d_{01}}{j}p^{j}(1-p)^{d_{01}-j},
\end{align*}
thereby completing the proof.
\end{proof}

This explicit form in Equation (\ref{P_2-form}) demonstrates that, for fixed channel transition probabilities $p$ and $q$ (embedded in $\gamma$), 
the pairwise error probability $P_2({\bf x}\to{\bf x}')$ depends solely on the numerical values of the directional distances $d_{10}$ and $d_{01}$, 
and not on the specific realizations of the codewords ${\bf x}$ and ${\bf x}'$. 
Therefore, in the next theorem, we will aggregate such pairwise contributions through the bidistance distribution to derive a more discriminative bound on the average error probability $P_e(C)$. 

\begin{theorem}\label{Refined-bound}
In a binary asymmetric channel with transition probabilities $p,q~(0<p\le q<1/2)$, define $\gamma=\log_{\frac{q}{1-p}}(\frac{p}{1-q})$.
Let $C\subseteq \F_2^n$ be a binary code and ${\cal H}_C=\{d_A(\mathbf{x},\mathbf{y}):\mathbf{x}\ne \mathbf{y}\in C\}$.  
Then we have
\begin{equation}\label{eq:bound}
P_e(C)\le  \frac{1}{|C|}\sum_{(d_{10},d_{01})\in \mathcal{H}_C} A(d_{10},d_{01}) P(d_{10},d_{01}),
\end{equation}
where 
\begin{equation*}
P(d_{10},d_{01})=\sum_{(i,j)\in{\cal R}_{(d_{10},d_{01})}}\binom{d_{10}}{i}q^{i}(1-q)^{d_{10}-i}\cdot\binom{d_{01}}{j}p^{j}(1-p)^{d_{01}-j},
\end{equation*}
and $${\cal R}_{(d_{10},d_{01})}=\{(i,j):0\leq i\leq d_{10},0\leq j\leq d_{01},i+j\geq \lceil(d_{10}\gamma+d_{01})/(\gamma+1)\rceil\}.$$
\end{theorem}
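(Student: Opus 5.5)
The plan is to start from the union bound (\ref{bound-PEP}),
\[
P_e(C)\le \frac{1}{|C|}\sum_{\mathbf{x}\neq\mathbf{x}'\in C}P_2(\mathbf{x}\rightarrow\mathbf{x}'),
\]
and then regroup the ordered pairs according to their asymmetric Hamming bidistance. First I will justify (\ref{bound-PEP}) for the decoder $\mathcal{D}_C$. If $\mathbf{x}$ is sent and $\mathcal{D}_C(\mathbf{y})\neq\mathbf{x}$, then either some other codeword has strictly larger likelihood, or the maximum is attained non-uniquely. In both cases there is some $\mathbf{x}'\neq\mathbf{x}$ with $Pr(\mathbf{y}|\mathbf{x}')\ge Pr(\mathbf{y}|\mathbf{x})$. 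Hence the error set $\{\mathbf{y}:\mathcal{D}_C(\mathbf{y})\ne\mathbf{x}\}$ is contained in the union over $\mathbf{x}'\ne\mathbf{x}$ of the sets $V$ in (\ref{eq:pairwise}). Subadditivity of the measure $Pr(\cdot|\mathbf{x})$, then averaging over $\mathbf{x}$ with weight $1/|C|$, gives (\ref{bound-PEP}). The non-strict inequality in the definition of $V$ is exactly what covers the failure outcome $\mathbf{f}$.

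Next, I apply Lemma \ref{P_2-comp}. For each ordered pair $(\mathbf{x},\mathbf{x}')$ with $d_A(\mathbf{x},\mathbf{x}')=(d_{10},d_{01})$, it gives $P_2(\mathbf{x}\rightarrow\mathbf{x}')=P(d_{10},d_{01})$. This quantity depends only on the bidistance, through ${\cal R}_{(d_{10},d_{01})}$. The reduction to the differing positions uses Lemma \ref{equiv}, which translates $LLR\le 0$ into $k_{10}+k_{01}\ge\lceil(d_{10}\gamma+d_{01})/(\gamma+1)\rceil$. It also uses the fact that coordinates in region $R1$ marginalize out.

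Then I partition the set of ordered pairs $\{(\mathbf{x},\mathbf{x}'):\mathbf{x}\ne\mathbf{x}'\}$ by the value of $d_A(\mathbf{x},\mathbf{x}')\in{\cal H}_C$. By definition of the bidistance distribution, the class with value $(d_{10},d_{01})$ has exactly $A(d_{10},d_{01})$ elements. This requires noting that $d_A(\mathbf{x},\mathbf{x}')=(0,0)$ iff $\mathbf{x}=\mathbf{x}'$, so the pairs with $\mathbf{x}\neq\mathbf{x}'$ are precisely those counted by $A(d_{10},d_{01})$ with $(d_{10},d_{01})\ne(0,0)$. Summing the constant value $P(d_{10},d_{01})$ over each class gives the right-hand side of (\ref{eq:bound}).

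The only delicate point is the first step: making sure the union bound correctly accounts for ties, i.e.\ the failure message. I will handle this by the inclusion argument above. The rest is bookkeeping.
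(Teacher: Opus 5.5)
Your proposal is correct and takes the same route as the paper: start from the union bound (\ref{bound-PEP}), substitute $P_2(\mathbf{x}\rightarrow\mathbf{x}')=P(d_{10},d_{01})$ from Lemma \ref{P_2-comp}, and regroup the ordered pairs by bidistance. You also supply two details the paper leaves implicit: why the union bound covers ties and the failure output (via the non-strict inequality in $V$), and why $(0,0)\notin\mathcal{H}_C$, so that each $A(d_{10},d_{01})$ counts exactly the ordered pairs of distinct codewords.
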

\begin{proof}
Applying the result of Lemma \ref{P_2-comp} to the union bound in (\ref{bound-PEP}) yields the desired conclusion.	
\end{proof}

\begin{exmpl}\label{exp:3}
Let $C_1$ and $C_2$ be the two binary codes described in Example \ref{exp:1}. 
Applying Theorem \ref{Refined-bound} yields the refined upper bounds 
\begin{equation*}
P_e(C_1)\le 0.2683,~~P_e(C_2)\le 0.1129,
\end{equation*}
which are substantially closer to the true error probabilities $P_e(C_1)=0.2328$ and $P_e(C_2)=0.101$, than the discrepancy-based bounds (both equal to $0.5435$).
\end{exmpl}

Example \ref{exp:3} demonstrates the advantage of our bound when the two discrepancy-based bounds coincide (and fail to distinguish codes). 
In the following example, we further compare our bound with the two discrepancy-based bounds in a case where the latter yield distinct values. 
For a clear and fair comparison, we adopt exactly the same binary codes used in \cite{GR2022} (namely, \cite[Example IV.7 and Example IV.9]{GR2022}).

\begin{exmpl}\label{exp:4}
Let $p=0.1$, $C_1=\{(0,1,0,1,1),(1,1,0,0,0),(1,0,1,1,1)\}$ and $C_2=\{(1,1,1,1),(1,0,0,1),(0,0,0,0)\}$. 
Figures \ref{fig:1} and \ref{fig:2} depict how the two discrepancy-based bounds (Lemma \ref{lem:bound}) and our bound (Theorem \ref{Refined-bound}) vary as $q$ ranges from $0.1$ to $0.49$. 
The figures also illustrate the relationship between these bounds and the exact average error probabilities (shown as green curves), which are obtained via exhaustive simulation.
\end{exmpl}

Although in the preceding examples our bound is closer to the true $P_e$ than the two discrepancy-based bounds, the three bounds are generally incomparable, as illustrated in the following example.

\begin{exmpl}\label{exp:5}
Let $C=\{(1,1,1,0),(0,1,1,1),(1,1,0,0)\}$, and consider two channel conditions: $p=0.1$ and $0.4$.
Figures \ref{fig:3} and \ref{fig:4} show how the two discrepancy-based bounds (Lemma \ref{lem:bound}) and our bound (Theorem \ref{Refined-bound}) vary as $q$ ranges from $0.1$ to $0.49$. 
In both figures, the two discrepancy-based bounds coincide. 
As observed in Figure \ref{fig:3}, under favorable channel conditions (e.g., $p=0.1,q=0.11$), 
our bound (blue curve) lies closer to the true value (green curve) than the discrepancy-based bounds.
Conversely, Figure \ref{fig:4} illustrates that when the channel condition deteriorates (e.g., $p=0.4,q=0.45$), the discrepancy-based bounds become tighter than ours.
(Since we consider $0<p\le q<1/2$, in Figure \ref{fig:4} we are only concerned with the region where $q\ge 0.4$)
\begin{figure}[!htbp]
\centering
\begin{minipage}[b]{.24\linewidth}
\includegraphics[width=\linewidth]{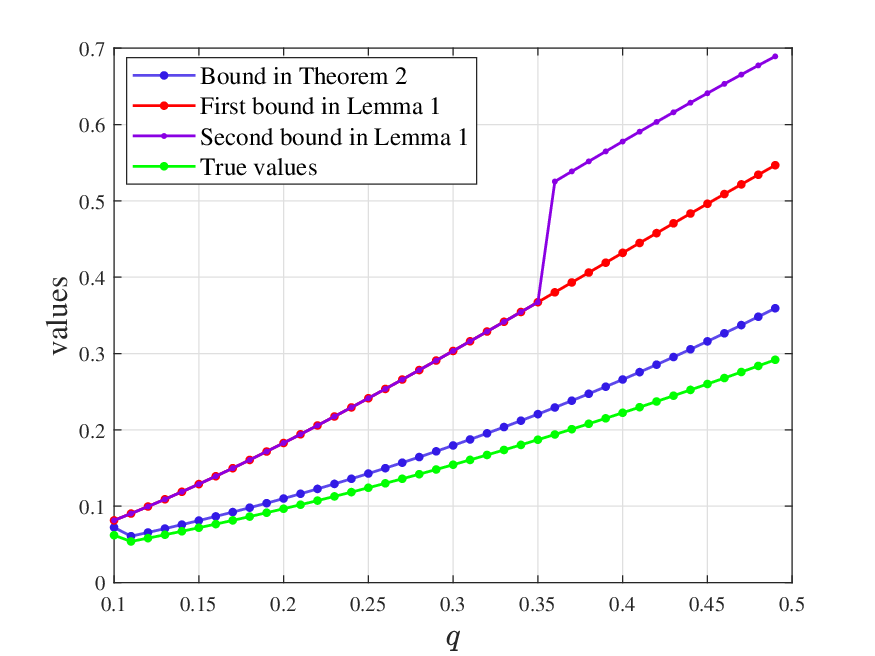}
\captionsetup{font={scriptsize}}
\caption{Values of the three bounds and \\ true values of $P_e(C_1)$ in Example \ref{exp:4}}
\label{fig:1}
\end{minipage}%
\begin{minipage}[b]{.24\linewidth}
\includegraphics[width=\linewidth]{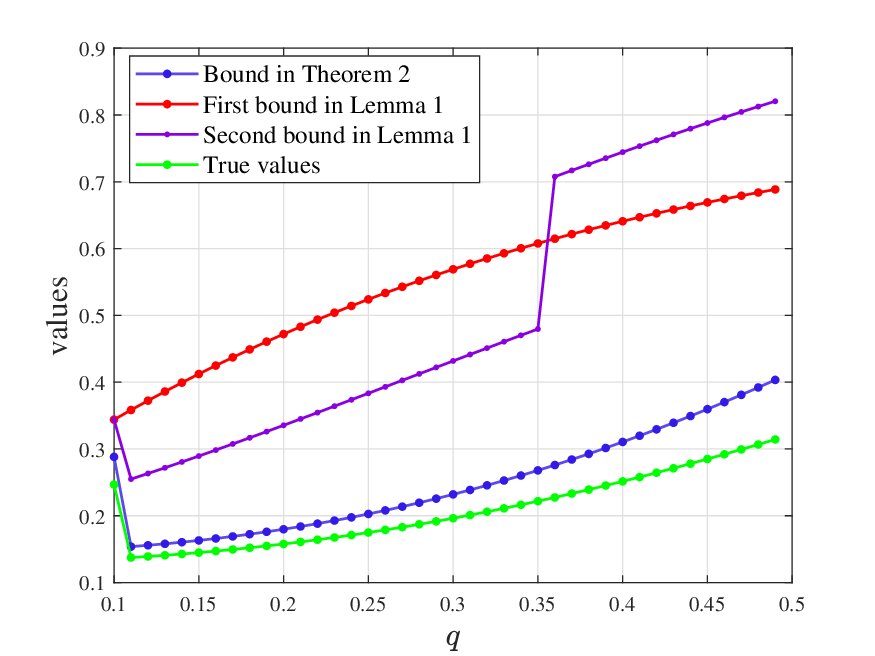}
\captionsetup{font={scriptsize}}
\caption{Values of the three bounds and \\  true values of $P_e(C_2)$ in Example \ref{exp:4}}
\label{fig:2}
\end{minipage}%
\begin{minipage}[b]{.24\linewidth}
\includegraphics[width=\linewidth]{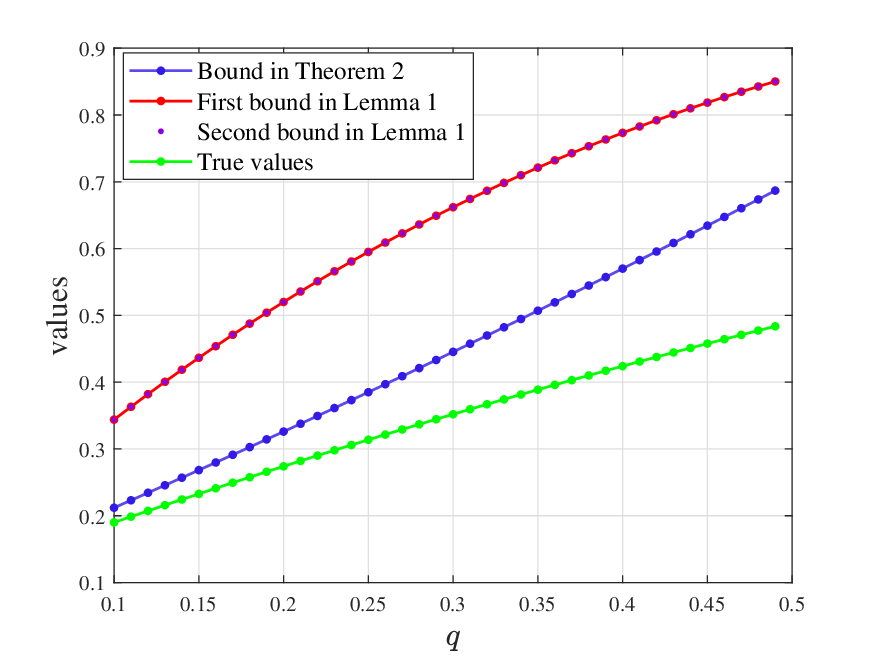}
\captionsetup{font={scriptsize}}
\caption{Values of the three bounds and \\ true values of $P_e(C)$ in Example \ref{exp:5} for\\ $p=0.1$}
\label{fig:3}
\end{minipage}%
\begin{minipage}[b]{.24\linewidth}
\includegraphics[width=\linewidth]{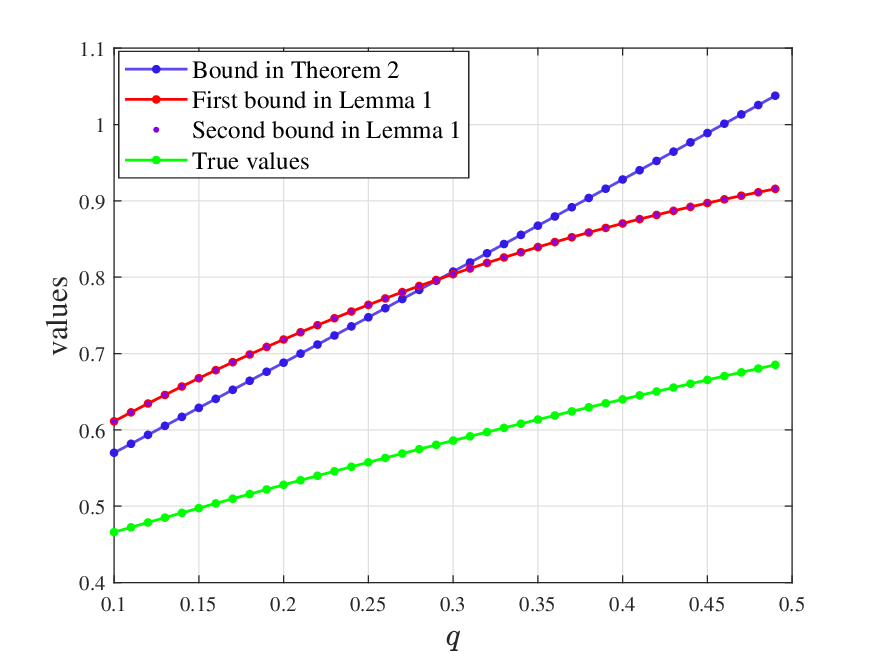}
\captionsetup{font={scriptsize}}
\caption{Values of the three bounds and \\  true values of $P_e(C)$ in Example \ref{exp:5} for\\ $p=0.4$}
\label{fig:4}
\end{minipage}
\end{figure}
\end{exmpl}

We have now introduced the concepts of asymmetric bidistance and its distribution for binary codes, and derived a more discriminative bound on the average error probability based on these notions. 
However, determining the exact bidistance distribution of a general binary code is considerably more challenging than computing its conventional weight distribution. 
In the remainder of this paper, we address this challenge for several specific classes of binary codes by employing combinatorial constructions, thereby obtaining their complete bidistance distributions.

\section{Asymmetric Bidistance Distribution for Few-Weight Codes}\label{sec:4}

For binary vectors $\mathbf{x},\mathbf{y}\in \F_2^n \backslash\{\mathbf{0}\}$ and $\mathbf{x}\ne \mathbf{y}$, the definition of the asymmetric Hamming bidistance immediately yields the following relations:
\begin{equation}\label{eq:system}
\left\{
\begin{split}
   d_{10}(\mathbf{x},\mathbf{y})+d_{01}(\mathbf{x},\mathbf{y}) & =wt(\mathbf{x}+\mathbf{y}), \\
   d_{10}(\mathbf{x},\mathbf{y})+d_{11}(\mathbf{x},\mathbf{y}) & =wt(\mathbf{x}), \\
   d_{01}(\mathbf{x},\mathbf{y})+d_{11}(\mathbf{x},\mathbf{y}) & =wt(\mathbf{y}).
\end{split}
\right.
\end{equation}
Solving this linear system gives explicit expressions for the directional distances in terms of the weights of ${\bf x},{\bf y}$ and their sum:
\begin{equation}\label{eq:system2}
\left\{
\begin{split}
   d_{10}(\mathbf{x},\mathbf{y}) & =\frac{1}{2} \big(wt(\mathbf{x}+\mathbf{y})+wt(\mathbf{x})-wt(\mathbf{y})\big), \\
   d_{01}(\mathbf{x},\mathbf{y}) & =\frac{1}{2} \big(wt(\mathbf{x}+\mathbf{y})-wt(\mathbf{x})+wt(\mathbf{y})\big).
\end{split}
\right.
\end{equation}
From (\ref{eq:system2}), it follows that once the weights of ${\cal C}$ and ${\cal C}+{\cal C}$ (defined as $\{\mathbf{x}+\mathbf{y}: \mathbf{x},\mathbf{y}\in {\cal C} \}$) are known,  the possible values of the asymmetric bidistance between any two distinct codewords in ${\cal C}$ are determined. 
However, determining how often each admissible pair actually occurs -- i.e., obtaining the full bidistance distribution -- remains a challenging task in general, even for linear codes (for which ${\cal C}+{\cal C}={\cal C}$).

In this section, we address this challenge for specific families of two-weight and three-weight codes. 
By employing two combinatorial structures -- strongly regular graphs and 3-class association schemes -- we can compute their complete asymmetric bidistance distributions. 
These structures will be treated separately in the following two subsections.

\subsection{Two-weight Codes via Strongly Regular Graphs}\label{sec:4-1}

In this subsection, we employ strongly regular graphs to compute the bidistance distribution of two-weight projective codes (excluding the zero codeword). 
We begin by recalling the necessary definitions and known results.

A connected graph $G$ with $v$ vertices is called \emph{strongly regular} with parameters $(v,K,\lambda,\mu)$ if
it is regular of valency $K$, and for any two distinct vertices, the number of common neighbors is $\lambda$ or $\mu$ according as the two vertices are adjacent or non-adjacent.
The \emph{complement} of $G$ is also strongly regular, with parameters $(v,\bar{K},\bar{\lambda},\bar{\mu})$,
where $\bar{K}=v-K-1,~\bar{\lambda}=v-2K+\mu-2$ and $\bar{\mu}=v-2K+\lambda$.

Two equivalent definitions of the adjacency matrix are commonly used in the literature. 
The first, denoted by $B_G=(b_{ij})^{v}_{i,j=1}$ and introduced in \cite{CK1986}, sets $b_{ij}=1$ for distinct adjacent vertices and  $b_{ij}=0$ otherwise.
This matrix satisfies:
\begin{equation}\label{eq:mat-B1}
B_G J_v=J_v B_G=K J_v,
\end{equation}
\begin{equation}\label{eq:mat-B2}
B_G^2-(\lambda-\mu)B_G-(K-\mu)I_v=\mu J_v,
\end{equation}
where $J_v$ denotes the all-ones matrix of order $v$  and $I_v$ is the identity matrix.

The second definition, due to Delsarte\cite{Delsarte1972}, is given by $A_G=(a_{ij})^{v}_{i,j=1}$ with
 $a_{ii}=0$, and for  $i\neq j$, $a_{ij}=a_{ji}=-1$ if the vertices $i$ and $j$ are adjacent, and $a_{ij}=a_{ji}=1$ otherwise.
The matrix satisfies:
\begin{equation}\label{eq:mat-A1}
A_G J_v=J_v A_G=\rho_0 J_v,
\end{equation}
\begin{equation}\label{eq:mat-A2}
(A_G-\rho_1 I_v)(A_G-\rho_2 I_v)=(v-1+\rho_1 \rho_2)J_v,
\end{equation}
where the constants $\rho_0$, $\rho_1$, $\rho_2$ are determined by the graph parameters as follows.

\begin{lemma}
With the notation above, we have $\rho_0=v-1-2K$, and $\rho_1,\rho_2$ satisfy
\begin{equation}\label{eq:coe}
\left\{
\begin{array}{l}
2(\mu-\lambda-1)=\rho_1+\rho_2, \\
2\lambda+2\mu-4K+1=\rho_1 \rho_2.
\end{array}
\right.
\end{equation}   
\end{lemma}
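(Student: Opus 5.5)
The plan is to express Delsarte's matrix $A_G$ in terms of $B_G$ and then translate the identities (\ref{eq:mat-B1}) and (\ref{eq:mat-B2}) into the form (\ref{eq:mat-A1}) and (\ref{eq:mat-A2}), reading off the constants by comparing coefficients. By definition, the off-diagonal entries of $A_G$ are $-1$ on edges and $+1$ on non-edges, and its diagonal is zero. This gives the identity
\[
A_G = J_v - I_v - 2B_G .
\]

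For $\rho_0$, multiply this identity by $J_v$ and use $J_v^2 = vJ_v$ together with (\ref{eq:mat-B1}):
\[
A_G J_v = vJ_v - J_v - 2KJ_v = (v-1-2K)J_v .
\]
Because $A_G$ and $J_v$ are symmetric, the same holds for $J_vA_G$. Hence $\rho_0 = v-1-2K$.

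For $\rho_1$ and $\rho_2$, substitute $B_G = \tfrac12\,(J_v - I_v - A_G)$ into (\ref{eq:mat-B2}). The products involving $J_v$ simplify as follows:
\begin{itemize}
\item $A_G J_v = J_v A_G = \rho_0 J_v$;
\item $J_v^2 = vJ_v$.
\end{itemize}
After this substitution, expand $B_G^2 = \tfrac14\,(A_G + I_v - J_v)^2$ and multiply the whole equation by $4$. The equation then takes the form
\[
A_G^2 + \alpha A_G + \beta I_v = \kappa J_v
\]
for explicit constants $\alpha,\beta,\kappa$ depending on $v,K,\lambda,\mu$.

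Explicitly, $4B_G^2 = A_G^2 + 2A_G + I_v + (v - 2 - 2\rho_0)J_v$. The term $-4(\lambda-\mu)B_G$ contributes $2(\lambda-\mu)A_G + 2(\lambda-\mu)I_v$ plus a $J_v$ term, and $-4(K-\mu)I_v$ contributes only to the identity coefficient. Collecting terms gives
\begin{align*}
\alpha &= 2 + 2(\lambda-\mu) = -2(\mu-\lambda-1),\\
\beta &= 1 + 2(\lambda-\mu) - 4(K-\mu) = 2\lambda + 2\mu - 4K + 1 .
\end{align*}
Comparing with the expansion $(A_G-\rho_1 I_v)(A_G-\rho_2 I_v) = A_G^2 - (\rho_1+\rho_2)A_G + \rho_1\rho_2 I_v$, we take $\rho_1,\rho_2$ to be the roots of $t^2 + \alpha t + \beta$. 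This gives (\ref{eq:coe}).

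It remains to confirm that the $J_v$ coefficient $\kappa$ equals $v - 1 + \rho_1\rho_2$. I would check this as a consistency condition by applying the quadratic to the all-ones vector $\mathbf{1}$, which satisfies $A_G\mathbf{1} = \rho_0\mathbf{1}$. This reduces the claim to the standard identity $K(K-\lambda-1) = (v-K-1)\mu$ for strongly regular graphs. That identity itself follows by applying (\ref{eq:mat-B2}) to $\mathbf{1}$, since $B_G\mathbf{1} = K\mathbf{1}$.

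The main obstacle is only bookkeeping: tracking the $J_v$ terms correctly. Everything else is coefficient matching.
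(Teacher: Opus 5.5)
Your proposal is correct and is essentially the paper's proof. It uses the same substitution $A_G=J_v-I_v-2B_G$ and the same derivation of $\rho_0=v-1-2K$ from (\ref{eq:mat-B1}), and your coefficients $\alpha=2(\lambda-\mu+1)$ and $\beta=2\lambda+2\mu-4K+1$ match the $A_G$- and $I_v$-coefficients of (\ref{eq:15}).

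The one real difference is how you pin down $\rho_1,\rho_2$. You \emph{choose} them as the roots of $t^2+\alpha t+\beta$ and then verify the $J_v$-coefficient. The paper instead compares coefficients using the linear independence of $A_G,I_v,J_v$, which holds as soon as $G$ has both edges and non-edges. That independence is what shows that whatever constants appear in (\ref{eq:mat-A2}) must satisfy (\ref{eq:coe}), so you should state it. Your $J_v$ check is also simpler than you make it: substituting $\rho_0=v-1-2K$ gives $\kappa=2\rho_0+2\lambda+2\mu+2-v=v-1+\rho_1\rho_2$ directly, so the identity $K(K-\lambda-1)=(v-K-1)\mu$ is not needed.
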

\begin{proof}
From the two definitions of the adjacency matrices, we obtain the fundamental relation
\begin{equation}\label{eq:relation}
A_G=J_v-I_v-2B_G.
\end{equation}

Substituting (\ref{eq:relation}) into (\ref{eq:mat-A1}) and using (\ref{eq:mat-B1}) yields
\begin{equation}\label{eq:K}
\begin{split}
    (J_v-I_v-2B_G)J_v &=\rho_0 J_v,\\
    J_v^2-J_v-2KJ_v &=\rho_0 J_v,\\
    (v-\rho_0-1)J_v &=2KJ_v,
\end{split}
\end{equation}
which gives $\rho_0=v-1-2K$. 
Next, substituting (\ref{eq:relation}) into 
(\ref{eq:mat-B2}) gives
\begin{equation*}
\frac{1}{4}(J_v-I_v-A_G)^2-\frac{1}{2}(\lambda-\mu)(J_v-I_v-A_G)-(K-\mu)I_v=\mu J_v.
\end{equation*}
Simplifying the expression and employing (\ref{eq:mat-A1}) leads to
\begin{equation}\label{eq:15}
   A_G^2+2(\lambda-\mu+1)A_G+(v-2\rho_0-2\lambda-2\mu-2)J_v +(2\lambda+2\mu-4K+1)I_v=0.
\end{equation}
On the other hand, expanding (\ref{eq:mat-A2}) yields
\begin{equation}\label{eq:nag}
A_G^2-(\rho_1+\rho_2)A_G-(v-1+\rho_1 \rho_2)J_v+\rho_1 \rho_2 I_v =0.
\end{equation}
Since $A_G,J_v$ and $I_v$ are linearly independent, comparing the coefficients of (\ref{eq:15}) and (\ref{eq:nag}) directly yields the system of equations in the lemma.
\end{proof}

The connection between binary two-weight projective codes and strongly regular graphs was established by Delsarte \cite{Delsarte1972}.
Let $C$ be a two-weight $[n,k]$ linear code over $\F_2$ with nonzero weights $w_1<w_2$. 
The graph $G$ associated with $C$ is defined as follows:
\begin{itemize}
  \item the vertex set $V(G)$ consists of all codewords of $C$;
  \item the edge set is defined as $E(G)=\{(x,y)\in V(G)\times V(G):d(x,y)=w_1\}.$
\end{itemize}  
The following result provides the explicit parameters of the resulting strongly regular graph.

\begin{lemma}\cite[Theorem 2]{Delsarte1972}\label{lem:D1972}
Let $C$ be a two-weight $[n,k]$ projective code over $\F_2$ with weights $w_1<w_2$. 
Then the associated graph $G$ is strongly regular, 
and the constants $\rho_0,\rho_1,\rho_2$ appearing in (\ref{eq:mat-A1}) and (\ref{eq:mat-A2}) are given by
\begin{equation*}
\begin{split}
   (w_2-w_1)\rho_0 &= n\cdot 2^k-(w_1+w_2)(2^k-1), \\
   (w_2-w_1)\rho_i &= w_1+w_2-(1+(-1)^i)\cdot 2^{k-1},~~i=1,2.
\end{split}
\end{equation*}
\end{lemma}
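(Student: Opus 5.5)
The plan is to identify $G$ with a Cayley graph on $\F_2^k$ and read off the eigenvalues of the Delsarte matrix $A_G$ from a character sum. The eigenvalue computation is what produces the stated formulas for $\rho_0,\rho_1,\rho_2$.

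\emph{Setup.} Fix a generator matrix of $C$ with columns $g_1,\ldots,g_n\in\F_2^k$. Projectivity ($d(C^\perp)\ge 3$) means the $g_j$ are nonzero and pairwise distinct. Write $c_u=(u\cdot g_1,\ldots,u\cdot g_n)$ for $u\in\F_2^k$. Then
\[\mathrm{wt}(c_u)=\tfrac12\big(n-S(u)\big),\qquad S(u)=\sum_{j=1}^n(-1)^{u\cdot g_j}.\]
Since $d(c_u,c_v)=\mathrm{wt}(c_{u+v})$, the matrix $A_G$ indexed by $\F_2^k$ has $(u,v)$-entry $f(u+v)$. Here $f(0)=0$ and, for $x\neq 0$,
\[f(x)=\frac{2\,\mathrm{wt}(c_x)-w_1-w_2}{w_2-w_1},\]
which equals $-1$ when $\mathrm{wt}(c_x)=w_1$ and $+1$ when $\mathrm{wt}(c_x)=w_2$. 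Hence every character $\chi_a(u)=(-1)^{a\cdot u}$ is an eigenvector of $A_G$, with eigenvalue $\theta_a=\sum_{x\neq0}f(x)(-1)^{a\cdot x}$.

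\emph{Eigenvalues.} Substitute $(w_2-w_1)f(x)=n-w_1-w_2-S(x)$ and use orthogonality of characters. Because every $g_j\neq0$, we have $\sum_{x}S(x)=0$, so $\sum_{x\ne0}S(x)=-n$.

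For $a=0$ this gives
\[(w_2-w_1)\theta_0=(n-w_1-w_2)(2^k-1)+n=n2^k-(w_1+w_2)(2^k-1),\]
which is $\rho_0$.

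For $a\neq0$ we have $\sum_{x\neq0}(-1)^{a\cdot x}=-1$ and
\[\sum_{x\ne0}S(x)(-1)^{a\cdot x}=2^kN(a)-n,\]
where $N(a)$ is the number of columns equal to $a$. Projectivity gives $N(a)\in\{0,1\}$. Therefore
\[(w_2-w_1)\theta_a=w_1+w_2-2^kN(a),\]
which is exactly the formula with $i=1$ (when $N(a)=0$) or $i=2$ (when $N(a)=1$).

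\emph{Matrix identities and strong regularity.} Since $\chi_0$ is the all-ones vector, $A_GJ_v=J_vA_G=\rho_0J_v$, which is (\ref{eq:mat-A1}). The operator $(A_G-\rho_1I_v)(A_G-\rho_2I_v)$ annihilates every nontrivial character and maps $\chi_0$ to a multiple of itself. It is therefore a scalar multiple $cJ_v$, where $v=2^k$. To find $c$, compare diagonal entries: $(A_G^2)_{uu}=v-1$ because all off-diagonal entries are $\pm1$, and $A_G$ has zero diagonal. This gives $c=v-1+\rho_1\rho_2$, which is (\ref{eq:mat-A2}).

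Next, substitute $A_G=J_v-I_v-2B_G$, i.e.\ run the computation of the previous lemma in reverse. This yields (\ref{eq:mat-B1}) and (\ref{eq:mat-B2}) for $B_G$, with $K$ determined by $\rho_0=v-1-2K$, and with $\lambda,\mu$ solved from (\ref{eq:coe}). Two consequences follow: $B_G$ is regular, and $B_G^2$ lies in the span of $B_G$, $I_v$ and $J_v$. Reading off entries, the number of common neighbours of two distinct vertices depends only on whether they are adjacent. So $G$ is strongly regular.

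\emph{Main obstacle.} I expect the main difficulty to be the bookkeeping in this last step, together with the non-degeneracy conditions. Connectedness and $\rho_1\neq\rho_2$ must be checked. The inequality $\rho_1\neq\rho_2$ is immediate, since $\rho_1-\rho_2=2^k/(w_2-w_1)\neq0$. Connectedness follows because both weights actually occur (the code is two-weight), so neither the graph nor its complement is a disjoint union of cliques. The key nontrivial step that makes everything work is the character-sum evaluation. It relies on projectivity, through $N(a)\in\{0,1\}$, to obtain exactly two nontrivial eigenvalues.
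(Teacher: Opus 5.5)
Your character-sum computation of the eigenvalues is correct. So are your derivations of (\ref{eq:mat-A1}) and (\ref{eq:mat-A2}), and the passage back to $B_G$ showing that the number of common neighbours depends only on adjacency. This is essentially the standard argument behind Delsarte's theorem; the paper itself gives no proof, only the citation.

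The step that fails is connectedness, which the paper's definition of a strongly regular graph requires. That both weights occur only tells you $G$ is neither complete nor edgeless; it does not exclude a disjoint union of cliques. Your claim about the complement is also false. Take the $[6,3]$ code whose columns are the six nonzero vectors of $\F_2^3$ other than a fixed $u$. It is projective, and $\mathrm{wt}(c_x)$ equals $3$ if $x\cdot u=1$ and $4$ otherwise. So the weight-$4$ codewords together with $\mathbf{0}$ are exactly $\{c_x : x\in u^{\perp}\}$. Here $G=K_{4,4}$ (adjacent iff $x,y$ lie in different cosets of $u^\perp$), while $\bar G=2K_4$ is a disjoint union of cliques.

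The repair uses your own eigenvalues. Suppose $G$ were disconnected. Then $\mu=0$, so its components are cliques. Since $G$ is a Cayley graph, the component of $\mathbf{0}$ is the span of $S_1=\{x\neq 0:\mathrm{wt}(c_x)=w_1\}$, and being a clique it equals $S_1\cup\{0\}$. Hence $H=S_1\cup\{0\}$ is a subspace of some dimension $m$ with $0<m<k$. On characters nontrivial on $H$, $A_G=J_v-I_v-2B_G$ acts as $1$. On nontrivial characters trivial on $H$, it acts as $1-2^{m+1}$. Both kinds of characters exist because $0<m<k$. So $\{\rho_1,\rho_2\}=\{1,1-2^{m+1}\}$, and $\rho_1>\rho_2$ forces $(w_1+w_2)/(w_2-w_1)=1$, i.e.\ $w_1=0$, a contradiction. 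Thus $G$ itself is always connected, though $\bar G$ need not be, as the example shows.
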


Solving the system (\ref{eq:coe}) together with the expressions for $\rho_0,\rho_1,\rho_2$ provided in Lemma \ref{lem:D1972} yields explicit formulas for the parameters $K,\lambda,\mu$ of the associated strongly regular graph $G$. 
These are summarized in the following lemma.

\begin{lemma}\label{lem:values}
Let $C$ be a two-weight $[n,k]$ projective code over $\F_2$ with nonzero weights $w_1<w_2$, and let 
$G$ be the strongly regular graph  associated with $C$ as constructed above. 
Then $G$ has parameters $(v,K,\lambda,\mu)$, and $v=2^k$,
\begin{equation*}
K = \frac{(2^{k}-1)w_2-2^{k-1}n}{(w_2-w_1)},
\end{equation*}
\begin{equation*}
  \begin{split}
    \lambda &= \frac{(2^{k}-2)w_2^2+(3-2^{k})w_1w_2+2^{k-1}((n-1)w_1-nw_2)}{(w_2-w_1)^2}, \\
    \mu &= \frac{2^{k}w_2^2+(1-2^k)w_1w_2-2^{k-1}((n+1)w_2+nw_1)}{(w_2-w_1)^2}.
  \end{split}
\end{equation*}
\end{lemma}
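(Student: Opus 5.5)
Since the vertex set of $G$ is $C$ itself, we have $v=|C|=2^k$ immediately. For $K,\lambda,\mu$ we combine two relations. First, the Lemma relating $\rho_0,\rho_1,\rho_2$ to the graph parameters gives $\rho_0=v-1-2K$ together with the system (\ref{eq:coe}). Second, Lemma \ref{lem:D1972} gives explicit values of $\rho_0,\rho_1,\rho_2$ in terms of $n,k,w_1,w_2$. We then solve the resulting linear equations for $K$, $\mu-\lambda$ and $\lambda+\mu$. One caveat: we must make sure the adjacency convention in Lemma \ref{lem:D1972} matches ours. In Delsarte's matrix $A_G$, the entry $-1$ corresponds to distance $w_1$, i.e.\ adjacency in $G$, so the conventions agree. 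The computed $K$ should be checked against this.

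\textbf{Step 1 (valency).} From $\rho_0=2^k-1-2K$ we get $K=\frac{1}{2}(2^k-1-\rho_0)$. Substituting $(w_2-w_1)\rho_0=n2^k-(w_1+w_2)(2^k-1)$ and combining over the denominator $2(w_2-w_1)$ gives a numerator of $(2^k-1)(w_2-w_1)-n2^k+(w_1+w_2)(2^k-1)=2(2^k-1)w_2-n2^k$. Hence $K=\frac{(2^k-1)w_2-2^{k-1}n}{w_2-w_1}$, as claimed. As a sanity check, this is exactly what counting codewords of weight $w_1$ gives: $A_{w_1}+A_{w_2}=2^k-1$ and the first moment $\sum_{\mathbf c} wt(\mathbf c)=n2^{k-1}$ (no zero coordinate, by projectivity) yield the same $A_{w_1}=K$.

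\textbf{Step 2 ($\lambda,\mu$).} Lemma \ref{lem:D1972} gives
\[
\rho_1=\frac{w_1+w_2}{w_2-w_1},\qquad \rho_2=\frac{w_1+w_2-2^k}{w_2-w_1}.
\]
So $\rho_1+\rho_2=\frac{2(w_1+w_2)-2^k}{w_2-w_1}$ and $\rho_1\rho_2=\frac{(w_1+w_2)(w_1+w_2-2^k)}{(w_2-w_1)^2}$. From (\ref{eq:coe}) we obtain
\[
\mu-\lambda=1+\tfrac12(\rho_1+\rho_2),\qquad \lambda+\mu=\tfrac12\bigl(\rho_1\rho_2-1\bigr)+2K.
\]
Adding and subtracting these gives $\mu$ and $\lambda$. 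We then bring every term to the common denominator $(w_2-w_1)^2$, using the value of $K$ from Step 1 multiplied by $(w_2-w_1)$.

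\textbf{Step 3 (simplification).} The last step is the algebra that puts the numerators into the stated form, and this is the main obstacle: it is a bookkeeping-heavy expansion of the quadratic terms $(w_1+w_2)^2$, $(w_2-w_1)^2$, $2^k(w_1+w_2)$, and $2^{k}(w_2-w_1)$ against $4K(w_2-w_1)^2$. The plan is to expand each contribution as a polynomial in $w_1,w_2$ with coefficients in $2^k$ and $n$. Then we collect the coefficients of $w_2^2$, $w_1w_2$ and $w_1^2$, and the terms linear in $w_i$ multiplied by $2^{k-1}$. The stated formulas contain no $w_1^2$ term, so we should verify that the $w_1^2$ coefficients cancel, namely $\tfrac14(w_1+w_2)^2-\tfrac14(w_2-w_1)^2$ contributes only $w_1w_2$. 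As a final check, we test the formulas on a small known example, such as the simplex-type two-weight code or the $[5,4]$ even-weight-related projective code, and confirm that $\lambda$ and $\mu$ come out as nonnegative integers.
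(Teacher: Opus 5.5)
Your approach is the paper's own: it says only that the formulas follow by solving (\ref{eq:coe}) with Delsarte's values of $\rho_0,\rho_1,\rho_2$ from Lemma \ref{lem:D1972}. Steps 1 and 2 are correct. $K$ comes out exactly as stated, your convention check is right, and your expressions for $\mu-\lambda$ and $\lambda+\mu$ are right.

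The gap is Step 3. You assert that the algebra lands on the stated numerators without doing it, and for $\mu$ it does not. Adding and subtracting your two relations gives compact forms that avoid most of the bookkeeping:
\[
\mu=K+\frac{(1+\rho_1)(1+\rho_2)}{4}=K+\frac{w_2\,(w_2-2^{k-1})}{(w_2-w_1)^2},
\]
\[
\lambda=K-1+\frac{(\rho_1-1)(\rho_2-1)}{4}=K-1+\frac{w_1\,(w_1-2^{k-1})}{(w_2-w_1)^2}.
\]
These follow because $1+\rho_1=2w_2/(w_2-w_1)$, $1+\rho_2=(2w_2-2^k)/(w_2-w_1)$, $\rho_1-1=2w_1/(w_2-w_1)$ and $\rho_2-1=(2w_1-2^k)/(w_2-w_1)$. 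Now expand using $K(w_2-w_1)^2=(w_2-w_1)\bigl((2^k-1)w_2-2^{k-1}n\bigr)$. The $\lambda$ numerator is exactly the stated one. The $\mu$ numerator, however, is
\[
2^k w_2^2+(1-2^k)w_1w_2-2^{k-1}\bigl((n+1)w_2-nw_1\bigr).
\]
This has the opposite sign on the $nw_1$ term from the statement. No bookkeeping can reach the printed $\mu$, because that formula is false.

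Your proposed sanity check would have caught this. For the $[5,4,2]$ even-weight code ($w_1=2$, $w_2=4$, projective since its dual is the repetition code), $G$ is the halved $5$-cube with parameters $(16,10,6,6)$. The corrected formula gives $\mu=6$, but the stated one gives $\mu=-34$. Likewise, for the paper's own $[27,6]$ example the stated formula gives $\mu=-1276$ instead of $20$, while $K=36$ and $\lambda=20$ check out. Note also that the simplex code is one-weight, so it is not a usable test case.

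What your argument actually proves is the lemma with $K$ and $\lambda$ as stated and $\mu$ replaced by the corrected expression above. You should carry out the final expansion explicitly rather than presume it matches the target.
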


\begin{remark}
Multiplying Equation (\ref{eq:nag}) by $J_v$ and comparing coefficients gives
\begin{equation*}
  \rho_0^2-\rho_0(\rho_1+\rho_2)-v(v-1+\rho_1\rho_2)+\rho_1\rho_2=0.
\end{equation*}
Then we have
\begin{equation*}
v=2^k=4w_1w_2/(n^2+n+4w_1w_2-2n(w_1+w_2))
\end{equation*}
if $n^2+n+4w_1w_2>2n(w_1+w_2)$.
That is, for some binary two-weight projective codes, the dimension $k$ can be determined from the length $n$ and the weights $w_1,w_2$.
\end{remark}

With the explicit parameters of the associated strongly regular graph established in Lemma \ref{lem:values}, 
we are now in a position to state the main result of this subsection.

\begin{theorem}\label{thm:2weight}
Let $C$ be a binary two-weight $[n,k]$ projective code with weight enumerator
$1+A_{w_1}z^{w_1}+A_{w_2}z^{w_2}$.
Then the asymmetric Hamming bidistance distribution of $C\backslash \{\mathbf{0}\}$ is given in Table \ref{tab:2weight}, 
where $\lambda$ and $\mu$ are the parameters of the associated strongly regular graph as determined in Lemma \ref{lem:values}.
\end{theorem}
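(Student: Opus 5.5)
For ordered pairs $(\mathbf{x},\mathbf{y})$ of distinct nonzero codewords of $C$, the relations (\ref{eq:system2}) give
\[
d_{10}(\mathbf{x},\mathbf{y})=\tfrac12\big(wt(\mathbf{x}+\mathbf{y})+wt(\mathbf{x})-wt(\mathbf{y})\big),\qquad
d_{01}(\mathbf{x},\mathbf{y})=\tfrac12\big(wt(\mathbf{x}+\mathbf{y})-wt(\mathbf{x})+wt(\mathbf{y})\big).
\]
Since $C$ is linear, $\mathbf{x}+\mathbf{y}\in C\setminus\{\mathbf{0}\}$, so each of $wt(\mathbf{x}),wt(\mathbf{y}),wt(\mathbf{x}+\mathbf{y})$ lies in $\{w_1,w_2\}$. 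This leaves eight weight patterns $(wt(\mathbf{x}),wt(\mathbf{y}),wt(\mathbf{x}+\mathbf{y}))$, and each pattern determines a unique bidistance:
\begin{itemize}
\item $(w_i,w_i,w_1)\mapsto(w_1/2,w_1/2)$;
\item $(w_i,w_i,w_2)\mapsto(w_2/2,w_2/2)$;
\item $(w_1,w_2,w_1)\mapsto\big(w_1-\tfrac{w_2}{2},\tfrac{w_2}{2}\big)$, with its mirror image for $(w_2,w_1,w_1)$;
\item $(w_1,w_2,w_2)\mapsto\big(\tfrac{w_1}{2},w_2-\tfrac{w_1}{2}\big)$, with its mirror image for $(w_2,w_1,w_2)$.
\end{itemize}
Note that the first two lines each merge two patterns ($i=1,2$). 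The distribution therefore reduces to counting, for each weight pattern, the number $N(a,b,c)$ of ordered pairs realizing it. A row of Table \ref{tab:2weight} may merge patterns when the pairs coincide (for instance, when $w_1=w_2/2$ the pairs $(w_1-\tfrac{w_2}{2},\tfrac{w_2}{2})$ and $(0,w_1)$ coincide); I would simply add the corresponding counts.

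\textbf{Counting via the graph.} The counts come from the strongly regular graph $G$ of Lemma \ref{lem:values}. In $G$ the vertex $\mathbf{0}$ is adjacent exactly to the $A_{w_1}=K$ codewords of weight $w_1$; the remaining $A_{w_2}=v-1-K$ nonzero codewords have weight $w_2$. Moreover, $\mathbf{x}$ and $\mathbf{y}$ are adjacent iff $wt(\mathbf{x}+\mathbf{y})=w_1$. I would fix $\mathbf{x}$ with $wt(\mathbf{x})=w_1$, so that $\mathbf{x}\sim\mathbf{0}$. Then:
\begin{itemize}
\item The number of $\mathbf{y}\neq\mathbf{0},\mathbf{x}$ adjacent to both $\mathbf{0}$ and $\mathbf{x}$ is $\lambda$, so $N(w_1,w_1,w_1)=K\lambda$.
\item The number of weight-$w_1$ vectors not adjacent to $\mathbf{x}$ is $K-1-\lambda$, so $N(w_1,w_1,w_2)=K(K-1-\lambda)$.
\item The neighbours of $\mathbf{x}$ other than $\mathbf{0}$ and the $\lambda$ common neighbours give $N(w_1,w_2,w_1)=K(K-1-\lambda)$.
\item Finally, $N(w_1,w_2,w_2)=K\big((v-1-K)-(K-1-\lambda)\big)$.
\end{itemize}
Next, fix $\mathbf{x}$ of weight $w_2$, which is non-adjacent to $\mathbf{0}$. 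Common neighbours of $\mathbf{0}$ and $\mathbf{x}$ number $\mu$, which gives $N(w_2,w_1,w_1)=(v-1-K)\mu$ and $N(w_2,w_1,w_2)=(v-1-K)(K-\mu)$. Among the weight-$w_2$ vectors, the neighbours of $\mathbf{x}$ number $K-\mu$, which gives $N(w_2,w_2,w_1)=(v-1-K)(K-\mu)$ and $N(w_2,w_2,w_2)=(v-1-K)(v-2-2K+\mu)$.

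\textbf{Consistency checks.} Two checks should hold: $N(w_1,w_2,c)=N(w_2,w_1,c)$, and the total should equal $(2^k-1)(2^k-2)$. The first amounts to the standard identity $K(K-1-\lambda)=(v-1-K)\mu$, which I would invoke (it follows from (\ref{eq:mat-B2}) multiplied by $J_v$). This identity also shows that the mirror bidistances $(i,j),(j,i)$ have equal frequency, in accordance with Remark (2).

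\textbf{Expected main obstacle.} The main difficulty is bookkeeping rather than conceptual: expressing everything in terms of $n,k,w_1,w_2,\lambda,\mu$, using $K=A_{w_1}$ from Lemma \ref{lem:values}, and merging coincident bidistance pairs consistently with the form of Table \ref{tab:2weight}. The degenerate cases where $w_1=w_2/2$ or other coincidences occur need care.
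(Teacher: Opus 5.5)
Your proposal is correct and follows essentially the paper's route: you split the ordered pairs by the weight pattern of $(\mathbf{x},\mathbf{y},\mathbf{x}+\mathbf{y})$, read off the bidistance from (\ref{eq:system2}), and count each pattern using the parameters of the associated strongly regular graph. The paper instead counts the four pure patterns via $\lambda$ and the complement's $\bar\lambda$, and obtains the mixed patterns from the symmetry among $\mathbf{x},\mathbf{y},\mathbf{x}+\mathbf{y}$; you count every pattern directly from $\lambda,\mu$ and match the table using $K(K-1-\lambda)=(v-1-K)\mu$. Your worry about merging rows is unnecessary, since the six pairs in Table \ref{tab:2weight} are pairwise distinct whenever $w_1\neq w_2$.
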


\begin{proof}
Let $\mathbf{x},\mathbf{y}\in C\backslash\{\mathbf{0}\}$ be two distinct codewords.
Since $C$ is linear, their sum $\mathbf{z}=\mathbf{x}+\mathbf{y}$ is also a non-zero codeword, i.e.,
$\mathbf{z}\in C\backslash\{\mathbf{0}\}$. As ${\cal C}$ is a two-weight code, the weights of $\mathbf{x},\mathbf{y}$ and $\mathbf{z}$ can each take only the values $w_1$ or $w_2$. 
This leads to $2^3=8$ possible cases for the triple $(wt({\bf x}),wt({\bf y}),wt({\bf z}))$. 
For each case, the asymmetric Hamming bidistance $(d_{10}({\bf x},{\bf y}),d_{01}({\bf x},{\bf y}))$ is uniquely determined by the linear system in (\ref{eq:system}). 
Let the total number of ordered pairs $(\mathbf{x},\mathbf{y})$ corresponding to each of these eight cases be denoted by $f_1,f_2,\ldots,f_8$, respectively.

Let $G$ be the strongly regular graph associated with $C$, with parameters $(v,K,\lambda,\mu)$ as established in Lemma \ref{lem:values}. 
Recall that $v=|C|=2^k$ and importantly, the valency $K$ equals the number of codewords of weight $w_1$, i.e., $K=A_{w_1}$. 
Consequently, the number of codewords of weight $w_2$ is $A_{w_2}=v-1-A_{w_1}$.

We now proceed with the case-by-case analysis.
\begin{enumerate}
\item $wt(\mathbf{x})=wt(\mathbf{y})=wt(\mathbf{z})=w_1$.

Solving (\ref{eq:system}) yields $(d_{10}(\mathbf{x},\mathbf{y}),d_{01}(\mathbf{x},\mathbf{y}))=(\frac{w_1}{2},\frac{w_1}{2})$. 
The condition $wt(\mathbf{z})=w_1$ implies that $\mathbf{x}$ and $\mathbf{y}$ are adjacent. 
Consequently, the zero codeword $\bf 0$, together with $\mathbf{x}$ and $\mathbf{y}$ forms a triangle in $G$. 
Now ${\bf x}$ can be chosen arbitrarily among the $A_{w_1}$ codewords of weight $w_1$. 
Once ${\bf x}$ is fixed, ${\bf y}$ must be one of the $\lambda$ common neighbors of ${\bf 0}$ and ${\bf x}$ in $G$, as per the definition of a strongly regular graph. 
Hence, the number of ordered pairs $({\bf x},{\bf y})$ satisfying this case is $f_1=\lambda\cdot A_{w_1}$.

\item[2)] $wt(\mathbf{x})=wt(\mathbf{y})=w_1$ and $wt(\mathbf{z})=w_2$.
  
Here we have $(d_{10}(\mathbf{x},\mathbf{y}),d_{01}(\mathbf{x},\mathbf{y}))=(\frac{w_2}{2},\frac{w_2}{2})$. 
The condition $wt(\mathbf{z})=w_2$ means that $\mathbf{x}$ and $\mathbf{y}$ are non-adjacent in $G$. 
The total number of ordered pairs of distinct $w_1$-weight codewords is $A_{w_1}(A_{w_1}-1)$. 
Since these pairs are partitioned into those where their sum has weight $w_1$ (Case 1) and those where their sum has weight $w_2$ (Case 2), we have $f_1+f_2=A_{w_1}(A_{w_1}-1)$. 
Consequently, $f_2=A_{w_1}(A_{w_1}-\lambda-1).$
 
\item[3)] $wt(\mathbf{x})=wt(\mathbf{y})=wt(\mathbf{z})=w_2$. 
  
Now we have $(d_{10}(\mathbf{x},\mathbf{y}),d_{01}(\mathbf{x},\mathbf{y}))=(\frac{w_2}{2},\frac{w_2}{2})$. 
In this case, $\bf 0$, $\mathbf{x}$ and $\mathbf{y}$ form a triangle in the complement graph $\bar G$, which is also strongly regular. 
By an argument analogous to Case 1 applied to $\bar G$, we obtain $f_3=\bar{\lambda}\cdot A_{w_2}$, where $\bar{\lambda}=v-2A_{w_1}+\mu-2$ is the corresponding parameter of $\bar G$.

\item[4)] $wt(\mathbf{x})=wt(\mathbf{y})=w_2$ and $wt(\mathbf{z})=w_1$.

Now $(d_{10}(\mathbf{x},\mathbf{y}),d_{01}(\mathbf{x},\mathbf{y}))=(\frac{w_1}{2},\frac{w_1}{2})$.
Since $f_3+f_4=A_{w_2}(A_{w_2}-1)$, we have $f_4=A_{w_2}(A_{w_2}-\bar{\lambda}-1).$
 \end{enumerate} 

The remaining cases involve codewords of mixed weights. 
Due to the linearity and symmetry of the code, the frequencies for these cases are directly related to those already computed.
\begin{enumerate} 
\item[5)] If $wt(\mathbf{x})=wt(\mathbf{z})=w_1$ and $wt(\mathbf{y})=w_2$, then $(d_{10}(\mathbf{x},\mathbf{y}),d_{01}(\mathbf{x},\mathbf{y}))=(w_1-\frac{w_2}{2},\frac{w_2}{2})$, and $f_5=f_2.$
\item[6)] If $wt(\mathbf{x})=w_1$ and $wt(\mathbf{y})=wt(\mathbf{z})=w_2$, then $(d_{10}(\mathbf{x},\mathbf{y}),d_{01}(\mathbf{x},\mathbf{y}))=(\frac{w_1}{2},w_2-\frac{w_1}{2})$, and $f_6=f_4.$
\item[7)] If $wt(\mathbf{x})=w_2$ and $wt(\mathbf{y})=wt(\mathbf{z})=w_1$, then $(d_{10}(\mathbf{x},\mathbf{y}),d_{01}(\mathbf{x},\mathbf{y}))=(\frac{w_2}{2},w_1-\frac{w_2}{2})$, and $f_7=f_5=f_2.$
\item[8)] If $wt(\mathbf{x})=wt(\mathbf{z})=w_2$ and $wt(\mathbf{y})=w_1$, then $(d_{10}(\mathbf{x},\mathbf{y}),d_{01}(\mathbf{x},\mathbf{y}))=(w_2-\frac{w_1}{2},\frac{w_1}{2})$, and $f_8=f_6=f_4.$
\end{enumerate}
By aggregating the results from all eight cases and translating the frequencies $f_i(i=1,2\ldots,8)$ into the corresponding bidistance pairs, 
we obtain the complete asymmetric Hamming bidistance distribution for $C\setminus\{\bf 0\}$, as presented in Table \ref{tab:2weight}. 
This completes the proof.
\end{proof}

\begin{table*}[!htbp]
\centering
\caption{The distribution of asymmetric Hamming bidistance of $C\backslash \{\mathbf{0}\}$}\label{tab:2weight}
\vspace{-0.1cm}
\resizebox{0.66\linewidth}{!}{
\renewcommand{\arraystretch}{1.4}
\begin{tabular}{c|c}
\hline
Asymmetric Hamming bidistance & Frequency                             \\ \hline
$(\frac{w_1}{2},\frac{w_1}{2})$            & $A_{w_1}A_{w_2}+\lambda A_{w_1}-\mu A_{w_2}$ \\ \hline
$(\frac{w_2}{2},\frac{w_2}{2})$            & $A_{w_2}(A_{w_2}-A_{w_1}+\mu-1)+A_{w_1}(A_{w_1}-\lambda-1)$  \\ \hline
$(w_1-\frac{w_2}{2},\frac{w_2}{2})$        & $A_{w_1}(A_{w_1}-\lambda-1)$          \\ \hline
$(\frac{w_1}{2},w_2-\frac{w_1}{2})$        & $A_{w_2}(A_{w_1}-\mu)$                \\ \hline
$(\frac{w_2}{2},w_1-\frac{w_2}{2})$        & $A_{w_1}(A_{w_1}-\lambda-1)$          \\ \hline
$(w_2-\frac{w_1}{2},\frac{w_1}{2})$        & $A_{w_2}(A_{w_1}-\mu)$                \\ \hline
\end{tabular}}
\end{table*}

We now present an example to illustrate Theorem \ref{thm:2weight}. 
\begin{exmpl}
Let $D=\{x\in \F_{64}^*:Tr_{2}^{8}(x^{9})=0\},$ where $Tr_{2}^{8}(\cdot)$ denotes the trace function from $\F_{8}$ to $\F_2$. 
Using MAGMA, we obtain $|D|=27$.
Write $D=\{d_1,d_2,\ldots,d_{27}\}$ and define 
$$C_D=\{\left(Tr_{2}^{64}(\beta d_1),Tr_{2}^{64}(\beta d_2),\ldots,Tr_{2}^{64}(\beta d_{27})\right):\beta \in \F_{64}\}.$$
Then $C_D$ is a binary two-weight $[27,6]$ projective code with weight enumerator
$1+36z^{12}+27z^{16}$.
By Lemma \ref{lem:D1972} and Lemma \ref{lem:values}, the strongly regular graph $G$ associated with $C_D$ has parameters $(64,36,20,20)$.
Applying Theorem \ref{thm:2weight}, the asymmetric Hamming bidistance distribution of $C_D\backslash \{\mathbf{0}\}$ is obtained as shown in Table \ref{tab:exp1}.
\begin{table}[!h]
\centering
\caption{Asymmetric Hamming bidistance distribution of $C_D\backslash \{\mathbf{0}\}$}\label{tab:exp1}
\resizebox{0.4\linewidth}{!}{
\begin{tabular}{cc|cc}
\hline
AHB & Frequency  & AHB & Frequency \\ \hline
$(6,6)$                    & 1152       & $(6,10)$                   & 432       \\ 
$(8,8)$                    & 810        & $(8,4)$                    & 540       \\ 
$(4,8)$                    & 540        & $(10,6)$                   & 432       \\ \hline
\end{tabular}}
\end{table}
\end{exmpl}

\subsection{Three-weight Codes via 3-class Association Schemes}\label{sec:4-2}
In this subsection, we employ 3-class association schemes to determine the asymmetric Hamming bidistance and its distribution for three-weight projective codes (with the zero codeword removed). 
We begin by recalling the necessary definitions and relevant results.

An \emph{association scheme} with $s$ classes consists of a finite set $X$ of $v$ points together with $s+1$ relations $R_0,R_1,\ldots,R_s$ on $X$ satisfying the following conditions:
\begin{enumerate}
\item $R_0=\{(x,x) \mid x\in X\}$;
\item $(x,y)\in R_k$ if and only if $(y,x)\in R_k$;
\item if $(x,y)\in R_k,$ the number of $z \in X$ such that $(x,z)\in R_i,$ and $(z,y)\in R_j,$ is a constant $p^k_{ij}$ depending only on $i,j,k$ and not on the particular choice of $x$ and $y$.
\end{enumerate}
The constants $p^k_{ij}$ are called the \emph{intersection numbers} of the scheme.
In particular, $v_i= p^0_{ii},i\in \{0,1,\ldots,s\},$ is called the \emph{valence} of the relation $R_i$. 
For each $R_i$, we define the  \emph{adjacency matrix} $D_i$ as  the $v\times v$ matrix whose rows and columns are indexed by the elements of $X$, with entries given by
\begin{equation*}
D_{i}(x,y)=\left\{
\begin{array}{cl}
  1, & \text{ if } (x,y)\in R_i,\\
  0, & \text{ otherwise. } 
\end{array}
\right.
\end{equation*}
From the definition, we have $D_0=I_v$, each $D_i$ is symmetric, and the following relations hold:
\begin{equation}\label{association_shm}
\sum_{i=0}^{s}D_i=J_v,~~~D_i D_j=\sum_{k=0}^{s} p^k_{ij} D_k,~~i,j=0,1,\ldots,s.
\end{equation}

Let $C$ be a three-weight $[n,k]$ projective code over $\F_2$ with nonzero weights $w_1,w_2,w_3$ (and $w_0=0$ for the zero codeword). 
Based on the Hamming distances among codewords, we define relations 
$$R_i=\{(\mathbf{c}_1,\mathbf{c}_2)\in C\times C: d(\mathbf{c}_1,\mathbf{c}_2)=w_i\},~i=0,1,2,3.$$
The pair ($C$,$\{R_i\}_{i=0}^3)$ is called the \emph{distance scheme} of $C$.
Let $C^{\bot}$ be the dual code of $C$.  This structure is closely related to the dual code 
$C^{\bot}$. Let the
\emph{distribution matrix} of $C^{\bot}$ be the matrix whose rows record the weight distributions of all cosets of $C^{\bot}$. 
The following lemma, due to Delsarte \cite{Delsarte1973}, characterizes when the distance scheme forms a 3-class association scheme.
%
\begin{lemma}\cite{Delsarte1973}
\label{lem:conditions}
The distance scheme of a three-weight projective code $C$ is a $3$-class association scheme if and only if the distribution matrix of the dual code $C^{\bot}$ contains four distinct rows.
\end{lemma}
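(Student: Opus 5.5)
The plan is to prove Lemma~\ref{lem:conditions} through the Bose--Mesner algebra of the distance scheme and Delsarte's duality. Because $C$ is linear, $d(\mathbf{c}_1,\mathbf{c}_2)=wt(\mathbf{c}_1+\mathbf{c}_2)$, so each relation $R_i$ is translation invariant. Each $D_i$ is therefore the Cayley matrix of the set $W_i=\{\mathbf{c}\in C: wt(\mathbf{c})=w_i\}$. All $D_i$ lie in the group algebra of $C\cong \F_2^k$, are simultaneously diagonalized by the characters of $C$, and commute.

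Conditions 1) and 2) of an association scheme hold automatically. Condition 3) is equivalent to the span $\langle D_0,\dots,D_3\rangle$ being closed under multiplication. Since the $D_i$ are commuting diagonalizable $0/1$ matrices summing to $J_v$, closure holds if and only if the span is a $4$-dimensional algebra. This in turn happens if and only if the characters of $C$ are partitioned into exactly $4$ classes $\Lambda_0,\dots,\Lambda_3$ on which every $D_i$ takes constant eigenvalues. The key step is therefore to count the distinct eigenvalue vectors
\[\chi\mapsto\big(\chi(D_0),\chi(D_1),\chi(D_2),\chi(D_3)\big),\qquad \chi(D_i)=\sum_{\mathbf{c}\in W_i}\chi(\mathbf{c}).\]
The algebra generated by the $D_i$ has dimension equal to the number of distinct such vectors. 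It contains the $4$ linearly independent $D_i$, so at least $4$ vectors always occur, and we have a scheme exactly when there are precisely $4$.

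Next I identify these eigenvalue vectors with the rows of the distribution matrix of $C^\perp$. Characters of $C$ correspond to cosets $\mathbf{u}+C^\perp$ via $\chi_{\mathbf{u}}(\mathbf{c})=(-1)^{\mathbf{u}\cdot\mathbf{c}}$. By the MacWilliams identity for cosets, the weight distribution $(B_0(\mathbf{u}),\dots,B_n(\mathbf{u}))$ of $\mathbf{u}+C^\perp$ is the Krawtchouk transform of the ``inner distribution''
\[a_j(\mathbf{u})=\sum_{\mathbf{c}\in C,\ wt(\mathbf{c})=j}(-1)^{\mathbf{u}\cdot\mathbf{c}}.\]
This gives $B_\ell(\mathbf{u})=|C|^{-1}\sum_j a_j(\mathbf{u})K_\ell(j)$. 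Since $C$ has only the weights $0,w_1,w_2,w_3$, we get $a_{w_i}(\mathbf{u})=\chi_{\mathbf{u}}(D_i)$ and $a_j=0$ for all other $j$. The transform is injective on vectors supported on $\{0,w_1,w_2,w_3\}$, because the Krawtchouk matrix is invertible. Hence two cosets have the same weight distribution if and only if the corresponding characters have the same eigenvalue vector. The number of distinct rows of the distribution matrix therefore equals the number of distinct eigenvalue vectors, and the previous paragraph finishes the proof.

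The main obstacle is the algebraic equivalence in the first step: showing that the span of the $D_i$ is closed under multiplication exactly when there are $4$ eigenspace classes. I would argue this with idempotents. The algebra $\mathcal{A}$ generated by the $D_i$ is spanned by the primitive idempotents $E_\Lambda$, one for each class $\Lambda$ of characters with equal eigenvalue vectors, so $\dim\mathcal{A}$ equals the number of classes. If there are exactly $4$ classes, then $\mathcal{A}$ has dimension $4$ and equals the span of the $D_i$, which gives the intersection numbers $p_{ij}^k$. Conversely, if $D_iD_j\in\langle D_k\rangle$ for all $i,j$, then $\dim\mathcal{A}=4$ and there are $4$ classes. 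Projectivity is needed only to match the hypotheses of Delsarte's setting; it is not used in the character argument itself.
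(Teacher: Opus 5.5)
The paper gives no proof of this lemma; it is quoted from Delsarte's thesis, so there is no internal argument to compare against. Your proof is correct, and it is in substance Delsarte's own route. You identify each row of the (outer) distribution matrix of $C^\perp$ with the Krawtchouk transform of $(1,\chi(D_1),\chi(D_2),\chi(D_3))$ via the coset MacWilliams identity. Because the Krawtchouk matrix is invertible, the number of distinct rows equals the number of distinct eigenvalue vectors. That number is $\dim\mathcal{A}$, which is always at least $4$ and equals $4$ exactly when $\langle D_0,\dots,D_3\rangle$ is closed under multiplication.

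Two steps deserve one explicit line each in a write-up:
\begin{enumerate}
\item The idempotents $E_\Lambda$ lie in $\mathcal{A}$ by Lagrange interpolation at the finitely many distinct eigenvalue vectors, since $D_0=I$ makes $\mathcal{A}$ unital. This is what gives $\dim\mathcal{A}=\#\{\text{classes}\}$.
\item The coefficients in $D_iD_j=\sum_k c_{ij}^k D_k$ are automatically the intersection numbers $p_{ij}^k$, because the $D_k$ have pairwise disjoint supports. So condition 3) is literally closure of the span.
\end{enumerate}
Also, Delsarte's distribution matrix has one row per vector of $\F_2^n$ rather than one per coset, but its rows are constant on cosets of $C^\perp$, so the count of distinct rows is unchanged.

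Compared with the bare citation, your argument shows that projectivity plays no role in this equivalence. Verbatim, it also gives the general statement: for any binary linear code $C$ with $s$ nonzero weights, $C^\perp$ has at least $s+1$ distinct coset weight distributions, with equality if and only if the distance scheme of $C$ is an $s$-class association scheme.
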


\begin{table*}[!htbp]
\centering
\caption{The distribution of asymmetric Hamming bidistance of $C\backslash \{\mathbf{0}\}$}\label{tab:3weight}
\vspace{-0.1cm}
\resizebox{0.8\linewidth}{!}{
\begin{threeparttable}
\renewcommand{\arraystretch}{1.5}
\begin{tabular}{c|c||c|c}
\hline
Asymmetric Hamming bidistance & Frequency                                                                  & Asymmetric Hamming bidistance & Frequency                         \\ \hline
$(\frac{w_1}{2},\frac{w_1}{2})$            & $v_1 p_{11}^{1}+v_2 p_{12}^2+v_3 p_{13}^3$                 & $(w_1-\frac{w_2}{2},\frac{w_2}{2})$        & $v_1 p_{12}^{1}$  \\ \hline
$(\frac{w_2}{2},\frac{w_2}{2})$            & $v_1 p_{12}^{1}+v_2 p_{22}^2+v_3 p_{23}^3$                 & $(w_1-\frac{w_3}{2},\frac{w_3}{2})$        & $v_1 p_{13}^{1}$  \\ \hline
$(\frac{w_3}{2},\frac{w_3}{2})$            & $v_1 p_{13}^{1}+v_2 p_{23}^2+v_3 p_{33}^3$                 & $(\frac{w_1}{2},w_2-\frac{w_1}{2})$        & $v_2 p_{12}^{2}$  \\ \hline
$(\frac{w_1+w_3-w_2}{2},\frac{w_2+w_3-w_1}{2})$        & $v_1 v_2-v_1 p_{12}^{1}-v_2 p_{12}^{2}$        & $(\frac{w_1}{2},w_3-\frac{w_1}{2})$        & $v_3 p_{13}^{3}$  \\ \hline
$(\frac{w_1+w_2-w_3}{2},\frac{w_2+w_3-w_1}{2})$        & $v_1 v_2-v_1 p_{12}^{1}-v_2 p_{12}^{2}$        & $(w_2-\frac{w_3}{2},\frac{w_3}{2})$        & $v_2 p_{23}^{2}$  \\ \hline
$(\frac{w_1+w_2-w_3}{2},\frac{w_1+w_3-w_2}{2})$        & $v_1 v_2-v_1 p_{12}^{1}-v_2 p_{12}^{2}$        & $(\frac{w_2}{2},w_3-\frac{w_2}{2})$        & $v_3 p_{23}^{3}$  \\ \hline
\end{tabular}
\begin{tablenotes}
\footnotesize
\item There are another 9 cases, which exhibit a symmetric distribution with respect to the asymmetric Hamming bidistance along with the latter 9 cases in the table and have equal frequencies, thus, they are omitted here.
\end{tablenotes}
\end{threeparttable}}
\end{table*}

\begin{table*}[!htbp]
\centering
\caption{The parameters of the 27 cases in the proof of Theorem \ref{thm:3weight}}
\label{tab:proof}
\renewcommand{\arraystretch}{1.4}
\resizebox{0.85\linewidth}{!}{
\begin{tabular}{|c|c|c|c|c|c|}
\hline
Case & $wt(\mathbf{x})$ & $wt(\mathbf{y})$ & $wt(\mathbf{z})$ & $(d_{10}(\mathbf{x},\mathbf{y}),d_{01}(\mathbf{x},\mathbf{y}))$ & Frequency     \\ \hline
$(1)$ & $w_1$            & $w_1$            & $w_1$            & $(\frac{w_1}{2},\frac{w_1}{2})$                                 & $f_1=v_1 \cdot p_{11}^1$ \\ \hline
$(2)$ & $w_1$            & $w_1$            & $w_2$            & $(\frac{w_2}{2},\frac{w_2}{2})$                                 & $f_2=v_1 \cdot p_{12}^1$ \\ \hline
$(3)$ & $w_1$            & $w_1$            & $w_3$            & $(\frac{w_3}{2},\frac{w_3}{2})$                                 & $f_3=v_1 \cdot p_{13}^1$ \\ \hline
$(4)$ & $w_2$            & $w_2$            & $w_1$            & $(\frac{w_1}{2},\frac{w_1}{2})$                                 & $f_4=v_2\cdot p_{12}^2$  \\ \hline
$(5)$ & $w_2$            & $w_2$            & $w_2$            & $(\frac{w_2}{2},\frac{w_2}{2})$                                 & $f_5=v_2 \cdot p_{22}^2$ \\ \hline
$(6)$ & $w_2$            & $w_2$            & $w_3$            & $(\frac{w_3}{2},\frac{w_3}{2})$                                 & $f_6=v_2 \cdot p_{32}^2$ \\ \hline
$(7)$ & $w_3$            & $w_3$            & $w_1$            & $(\frac{w_1}{2},\frac{w_1}{2})$                                 & $f_7=v_3 \cdot p_{31}^3$ \\ \hline
$(8)$ & $w_3$            & $w_3$            & $w_2$            & $(\frac{w_2}{2},\frac{w_2}{2})$                                 & $f_8=v_3 \cdot p_{32}^3$ \\ \hline
$(9)$ & $w_3$            & $w_3$            & $w_3$            & $(\frac{w_3}{2},\frac{w_3}{2})$                                 & $f_9=v_3 \cdot p_{33}^3$ \\ \hline
$(10)$ & $w_1$            & $w_2$            & $w_1$            & $(w_1-\frac{w_2}{2},\frac{w_2}{2})$                            & $f_{10}=f_2$ \\ \hline
$(11)$ & $w_1$            & $w_3$            & $w_1$            & $(w_1-\frac{w_3}{2},\frac{w_3}{2})$                            & $f_{11}=f_3$ \\ \hline
$(12)$ & $w_1$            & $w_2$            & $w_2$            & $(\frac{w_1}{2},w_2-\frac{w_1}{2})$                            & $f_{12}=f_4$ \\ \hline
$(13)$ & $w_2$            & $w_3$            & $w_2$            & $(w_2-\frac{w_3}{2},\frac{w_3}{2})$                            & $f_{13}=f_6$ \\ \hline
$(14)$ & $w_1$            & $w_3$            & $w_3$            & $(\frac{w_1}{2},w_3-\frac{w_1}{2})$                            & $f_{14}=f_7$ \\ \hline
$(15)$ & $w_2$            & $w_3$            & $w_3$            & $(\frac{w_2}{2},w_3-\frac{w_2}{2})$                            & $f_{15}=f_8$ \\ \hline
$(16)$ & $w_2$            & $w_1$            & $w_1$            & $(\frac{w_2}{2},w_1-\frac{w_2}{2})$                            & $f_{16}=f_2$ \\ \hline
$(17)$ & $w_3$            & $w_1$            & $w_1$            & $(\frac{w_3}{2},w_1-\frac{w_3}{2})$                            & $f_{17}=f_3$ \\ \hline
$(18)$ & $w_2$            & $w_1$            & $w_2$            & $(w_2-\frac{w_1}{2},\frac{w_1}{2})$                            & $f_{18}=f_4$ \\ \hline
$(19)$ & $w_3$            & $w_2$            & $w_2$            & $(\frac{w_3}{2},w_2-\frac{w_3}{2})$                            & $f_{19}=f_6$ \\ \hline
$(20)$ & $w_3$            & $w_1$            & $w_3$            & $(w_3-\frac{w_1}{2},\frac{w_1}{2})$                            & $f_{20}=f_7$ \\ \hline
$(21)$ & $w_3$            & $w_2$            & $w_3$            & $(w_3-\frac{w_2}{2},\frac{w_2}{2})$                            & $f_{21}=f_8$ \\ \hline

$(22)$ & $w_1$            & $w_2$            & $w_3$            & $(\frac{w_1+w_3-w_2}{2},\frac{w_2+w_3-w_1}{2})$                & $f_{22}=v_1\cdot v_2-f_2-f_4$ \\ \hline
$(23)$ & $w_1$            & $w_3$            & $w_2$            & $(\frac{w_1+w_2-w_3}{2},\frac{w_2+w_3-w_1}{2})$                & $f_{23}=f_{22}$ \\ \hline
$(24)$ & $w_2$            & $w_3$            & $w_1$            & $(\frac{w_1+w_2-w_3}{2},\frac{w_1+w_3-w_2}{2})$                & $f_{24}=f_{22}$ \\ \hline

$(25)$ & $w_2$            & $w_1$            & $w_3$            & $(\frac{w_2+w_3-w_1}{2},\frac{w_1+w_3-w_2}{2})$                & $f_{25}=f_{22}$ \\ \hline
$(26)$ & $w_3$            & $w_1$            & $w_2$            & $(\frac{w_2+w_3-w_1}{2},\frac{w_1+w_2-w_3}{2})$                & $f_{26}=f_{22}$ \\ \hline
$(27)$ & $w_3$            & $w_2$            & $w_1$            & $(\frac{w_1+w_3-w_2}{2},\frac{w_1+w_2-w_3}{2})$                & $f_{27}=f_{22}$ \\ \hline
\end{tabular}}
\end{table*}

We are now in a position to present the main result of this subsection.

\begin{theorem}\label{thm:3weight}
Let $C$ be a binary three-weight $[n,k]$ projective code with weight enumerator
$1+A_{w_1}z^{w_1}+A_{w_2}z^{w_2}+A_{w_3}z^{w_3}$ and suppose $C$ satisfies the condition in Lemma \ref{lem:conditions}. 
The asymmetric Hamming bidistance distribution of $C\backslash \{\mathbf{0}\}$ is given in Table \ref{tab:3weight},
where $v_i=A_{w_i}$ and $p_{ij}^k$ are the intersection numbers of the 3-class association scheme corresponding to $C$.
\end{theorem}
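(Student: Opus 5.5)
The plan mirrors the proof of Theorem \ref{thm:2weight}. Take two distinct nonzero codewords $\mathbf{x},\mathbf{y}\in C\backslash\{\mathbf{0}\}$ and set $\mathbf{z}=\mathbf{x}+\mathbf{y}$. By linearity $\mathbf{z}$ is a nonzero codeword, so each of $wt(\mathbf{x}),wt(\mathbf{y}),wt(\mathbf{z})$ lies in $\{w_1,w_2,w_3\}$. This gives $27$ cases. In each case (\ref{eq:system2}) determines $(d_{10},d_{01})$ uniquely, which produces the list in Table \ref{tab:proof}. It remains to count the ordered pairs in each case.

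For the counts, use the distance scheme and Lemma \ref{lem:conditions}. We have $(\mathbf{0},\mathbf{x})\in R_i$ exactly when $wt(\mathbf{x})=w_i$, so $v_i=p^0_{ii}=A_{w_i}$. Consider a case with $wt(\mathbf{x})=w_i$, $wt(\mathbf{y})=w_j$ and $wt(\mathbf{x}+\mathbf{y})=w_k$, so that $d(\mathbf{x},\mathbf{y})=w_k$. Fix $\mathbf{x}$ with $(\mathbf{0},\mathbf{x})\in R_i$. The admissible $\mathbf{y}$ are the points with $(\mathbf{0},\mathbf{y})\in R_j$ and $(\mathbf{y},\mathbf{x})\in R_k$, and there are $p^i_{jk}$ of them. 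The count for the case is therefore $v_i p^i_{jk}$. This directly yields the diagonal-weight cases (1)--(9), for example $f_2=v_1p^1_{12}$ with $\mathbf{x},\mathbf{y}$ of weight $w_1$ and $\mathbf{z}$ of weight $w_2$.

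The mixed cases come from symmetry. Note that $(\mathbf{x},\mathbf{y},\mathbf{z})$ can be permuted while staying in $C$, because any two of them sum to the third. Hence the number of ordered triples with weight pattern $(a,b,c)$ is invariant under permuting $(a,b,c)$. In particular $f_{10}=f_{16}=f_2$, $f_{11}=f_{17}=f_3$, $f_{12}=f_{18}=f_4$, and so on. The six all-distinct cases (22)--(27) share a common count $f_{22}$. To obtain it, count the ordered pairs with $wt(\mathbf{x})=w_1$ and $wt(\mathbf{y})=w_2$. There are $v_1v_2$ of them, and $wt(\mathbf{z})$ is $w_1$, $w_2$ or $w_3$. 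The first two possibilities are cases (10) and (12), with counts $f_2$ and $f_4$. Therefore $f_{22}=v_1v_2-v_1p^1_{12}-v_2p^2_{12}$.

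The final step aggregates cases that give the same bidistance. Cases (1), (4) and (7) all give $(\frac{w_1}{2},\frac{w_1}{2})$. Their frequencies sum to $v_1p^1_{11}+v_2p^2_{12}+v_3p^3_{13}$, after rewriting $v_3p^3_{31}$ as $v_3p^3_{13}$ using the symmetry $p^k_{ij}=p^k_{ji}$ of a symmetric scheme. The other two diagonal entries are handled the same way. The remaining cases fill the table, with their mirror images $(d_{10},d_{01})\leftrightarrow(d_{01},d_{10})$ coming from swapping $\mathbf{x}$ and $\mathbf{y}$.

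The main obstacle is bookkeeping rather than any conceptual difficulty. First, I must verify that the permutation-invariance argument correctly identifies each mixed-case frequency with one of the $v_ip^i_{jk}$ values. Second, I must check whether some of the listed bidistance pairs coincide for particular weights, in which case the table entries should be read as summed. I would double-check the count identities such as $v_ip^i_{jk}=v_jp^j_{ik}$, which are standard for association schemes and hold here because both sides count the same set of triples.
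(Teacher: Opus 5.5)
Your proposal is correct and follows essentially the same route as the paper. Both arguments enumerate the $27$ weight patterns of $(\mathbf{x},\mathbf{y},\mathbf{x}+\mathbf{y})$, count the repeated-weight patterns via intersection numbers $v_ip^i_{jk}$, handle the mixed patterns by permutation symmetry of the triple, and obtain $f_{22}=v_1v_2-f_2-f_4$ by complementary counting. Your remark that coinciding bidistances must be summed (e.g.\ when $w_2-w_1=w_3-w_2$, as for the dual Golay code) is exactly how the paper's example table is obtained.
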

\begin{proof}
Following the same approach as in the proof of Theorem \ref{thm:2weight}, we analyze all possible triples $(wt({\bf x}),wt({\bf y}),wt({\bf z}))$ with ${\bf z}={\bf x}+{\bf y}$. 
Since each of the three weights can take any of the three nonzero values $w_1,w_2,w_3$, there are $3^3=27$ distinct cases. 
Let the frequencies of ordered pairs $(\mathbf{x},\mathbf{y})$ corresponding to these cases be denoted by $f_1,f_2,\ldots,f_{27}$. 

By Lemma \ref{lem:conditions}, the code $C$ gives rise to a 3-class association scheme with relations
\begin{equation*}
R_i=\{(\mathbf{c}_1,\mathbf{c}_2)\in C\times C: d(\mathbf{c}_1,\mathbf{c}_2)=w_i\},~~i=0,1,2,3,
\end{equation*}
where $w_0=0$. From the definition of an association scheme, the valences are $v_i=A_{w_i}$ for $i=1,2,3$, and the intersection numbers $p_{ij}^k$ satisfy
\begin{center}
$p_{ij}^k=|\{{\bf y}\in C:({\bf x},{\bf y})\in R_i~{\rm and}~({\bf y},{\bf z})\in R_j\}|$ for any $({\bf x},{\bf z})\in R_k$.	
\end{center}
We now determine the frequencies $f_1$ through $f_{27}$ using these parameters. 

\paragraph{Cases 1-9}
For each ordered triple of weights $(wt({\bf x}),wt({\bf y}),wt({\bf z}))$, the asymmetric bidistance $(d_{10},d_{01})$ is uniquely determined by (\ref{eq:system}). 
The frequency of such a triple can be expressed in terms of the valences and intersection numbers. 
For instance, in Case 1 where $wt(\mathbf{x})=wt(\mathbf{y})=wt(\mathbf{z})=w_1$, we have $(\mathbf{0},\mathbf{x})\in R_1$ and $(\mathbf{0},\mathbf{y})\in R_1$. 
The condition $wt({\bf z})=w_1$ implies $({\bf x},{\bf y})\in R_1$. By the definition of intersection numbers, 
the number of ${\bf y}$ satisfying these relations for a fixed ${\bf x}$ is $p_{11}^1$. 
Since ${\bf x}$ can be any of the $v_1$ codewords of weight $w_1$, we obtain  $f_1=v_1 \cdot p_{11}^1$. 
The remaining cases from Case 2 to Case 9 are obtained similarly by applying the appropriate intersection numbers $p_{ij}^k$ corresponding to the relations among ${\bf 0},{\bf x},{\bf y}$ and ${\bf z}$. 
The explicit expressions are summarized in Table \ref{tab:proof}.

\paragraph{Cases 10-21}
By symmetry of the roles of ${\bf x},{\bf y},{\bf z}$, the frequencies for cases where the weight triples are permutations of each other coincide. 
Specifically, Cases 10-15 correspond to permutations of the triples already encountered in Cases 2, 3, 4, 6, 7, and 8, respectively. 
Hence we have $f_{10}=f_2,f_{11}=f_3, f_{12}=f_4,f_{13}=f_6,f_{14}=f_7,f_{15}=f_8$.

\paragraph{Cases 16-21}
Similarly, Cases 16-21 are permutations of Cases 2, 3, 4, 6, 7, and 8 from the perspective of 
${\bf z}$, yielding $f_{16}=f_2,f_{17}=f_3, f_{18}=f_4,f_{19}=f_6,f_{20}=f_7,f_{21}=f_8$.

\paragraph{Cases 22-27}
Consider Case 22, where the weight triple is $(w_1,w_2,w_3)$. The total number of ordered pairs $({\bf x},{\bf y})$ with $wt({\bf x})=w_1$ and $wt({\bf y})=w_2$ is $v_1v_2$. 
These pairs are distributed among Cases 10, 12, and 22 according to the weight of ${\bf z}={\bf x}+{\bf y}$. 
Therefore, $f_{22}=v_1\cdot v_2-f_{10}-f_{12}=v_1\cdot v_2-f_{2}-f_{4}$. By symmetry, the remaining cases in this group satisfy $f_{23}=f_{24}=f_{25}=f_{26}=f_{27}=f_{22}$. 

Collecting all frequencies and mapping them to the corresponding asymmetric Hamming bidistance pairs yields the distribution shown in Table \ref{tab:3weight}. 
This completes the proof.
\end{proof}

The following example is given to further clarify Theorem \ref{thm:3weight}.
\begin{table}
\centering
\caption{The asymmetric Hamming bidistance  distribution of $C^{\bot} \backslash \{\mathbf{0}\}$}\label{tab:exp2}
\resizebox{0.4\linewidth}{!}{
\renewcommand{\arraystretch}{1.2}
\begin{tabular}{cc}
\hline
Asymmetric Hamming bidistance & Frequency  \\
\hline
$(4,4)$                    & 566720  \\
$(6,6)$                    & 1190112  \\
$(8,8)$                    & 220110  \\
$(0,8),(8,0)$              & 7590    \\
$(2,6),(6,2)$              & 226688 \\
$(2,10),(10,2)$                    & 85008  \\
$(4,8),(8,4)$                   & 637560  \\
$(4,12),(12,4)$                    & 35420  \\
$(6,10),(10,6)$              & 113344    \\
\hline
\end{tabular}}
\end{table}

\begin{exmpl}
The binary Golay code $C$ is a linear code with parameters $[23,12]$, 
and its distribution matrix  has four distinct rows.
The dual code $C^{\bot}$ is a $[23,11]$ three-weight code with weight enumerator
$$1+506z^{8}+1288z^{12}+253z^{16}.$$
Consequently, the distance scheme of $C^{\bot}$ forms a $3$-class association scheme.
From (\ref{association_shm}), we obtain the valences and intersection numbers:
$$\begin{array}{lll}
   v_1=506, & v_2=1288, & v_3=253,\\
   p_{11}^1=210, & p_{12}^2=330, & p_{13}^3=140, \\
   p_{12}^1=280, & p_{22}^2=792, & p_{23}^3=112,\\
   p_{13}^1=15, & p_{23}^2=165, & p_{33}^3=0.
\end{array}$$
Applying Theorem \ref{thm:3weight} yields the asymmetric Hamming bidistance distribution of $C^{\bot} \backslash \{\mathbf{0}\}$, which is presented in Table \ref{tab:exp2}.
\end{exmpl}

\section{Asymmetric Hamming Bidistance Distributions of SBIBD-Derived Codes}\label{sec:6}

In a recent work \cite{ZWC2025}, we constructed several families of binary \emph{combinatorial neural (CN) codes} derived from symmetric balanced incomplete block designs (SBIBDs). 
These codes were shown to be optimal with respect to the improved Plotkin bound on the discrepancy measure $\delta_r$. In this section, we further determine their full asymmetric Hamming bidistance distributions.

We begin by recalling the construction of these codes.

A \emph{balanced incomplete block design (BIBD)} with parameters $(v,k,\lambda)$ is a pair $(G,{\cal B})$,
where $G$ is a finite set of $v$ points and ${\cal B}$ is a collection of $k$-element subsets (called \emph{blocks}) of $G$ such that every pair of distinct points is contained in exactly $\lambda$ blocks. 
For a non-degenerate BIBD (i.e., $1\leq k<v$), Fisher's Inequality \cite{CDS1999} states that the number of blocks satisfies $v\leq b=|{\cal B}|=\lambda v(v-1)/(k(k-1))$. 
Given a BIBD $(G,{\cal B})$, its \emph{complement} is defined as $(G,\bar{\cal B})$, where $\bar{\cal B}=\{\bar{B}=G\setminus B:B\in{\cal B}\}$. 
The complement of a $(v,k,\lambda)$-BIBD is a $(v,v-k,\bar\lambda)$-BIBD with $\bar\lambda=v-2k-\lambda$.
A BIBD is called \emph{symmetric}, denoted by SBIBD, if the number of blocks equals the number of points, i.e., $v=b$. 

SBIBD has many other interesting properties, some of which are listed below.
\begin{lemma}\cite{CDS1999}\label{SBIBD-lem}
Let $(G,{\cal B})$ be a $(v,k,\lambda)$-SBIBD. Then we have
\begin{enumerate}
	\item [\rm 1)] the replication number of each point in $G$ is equal to $k;$
	\item [\rm 2)] $|B\cap B'|=\lambda$ and $|B\cap \bar{B'}|=k-\lambda$  for any distinct $B,B'\in{\cal B};$ and
	\item [\rm 3)] its complement is a $(v,v-k,v-2k+\lambda)$-SBIBD$.$
\end{enumerate}
\end{lemma}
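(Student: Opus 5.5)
We prove the three items of Lemma \ref{SBIBD-lem} by double counting with the incidence matrix. Let $N$ be the $v\times v$ point--block incidence matrix of the $(v,k,\lambda)$-SBIBD, with rows indexed by points and columns by blocks, so $N_{xB}=1$ iff $x\in B$.

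For item 1), fix a point $x$ and let $r_x$ be the number of blocks containing it. Count the pairs $(y,B)$ with $y\neq x$ and $\{x,y\}\subseteq B$ in two ways. From the blocks side we get $r_x(k-1)$. From the points side we get $\lambda(v-1)$. Hence $r_x=\lambda(v-1)/(k-1)=r$ is independent of $x$. Counting incidences gives $vr=bk$, and since $b=v$ we obtain $r=k$.

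For item 2), items 1) and the pair condition give
\[
NN^T=(k-\lambda)I_v+\lambda J_v,\qquad
JN=kJ \text{ (columns sum to $k$)},\qquad
NJ=kJ \text{ (rows sum to $k$)}.
\]
The main step is to pass from $NN^T$ to $N^TN$, which describes block intersections. Non-degeneracy ($k<v$) together with $\lambda(v-1)=k(k-1)$ gives $k>\lambda$. The eigenvalues of $(k-\lambda)I+\lambda J$ are $k-\lambda>0$ and $k-\lambda+\lambda v=k^2>0$, so $N$ is invertible. From $NJ=JN=kJ$ we get $N^{-1}J=k^{-1}J$. Then
\[
N^TN=N^{-1}(NN^T)N=N^{-1}\big((k-\lambda)N+\lambda JN\big)=(k-\lambda)I+\lambda k\,N^{-1}J=(k-\lambda)I+\lambda J.
\]
So $|B\cap B'|=\lambda$ for any distinct blocks $B,B'$. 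It follows that $|B\cap\bar{B'}|=|B|-|B\cap B'|=k-\lambda$. This invertibility step is the only nontrivial point of the proof.

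For item 3), the complement design has $v$ blocks, each of size $v-k$. Take distinct points $x,y$ and use inclusion--exclusion. The number of blocks avoiding both $x$ and $y$ is
\[
v-(\text{blocks containing } x)-(\text{blocks containing } y)+(\text{blocks containing both}).
\]
By item 1) this equals $v-k-k+\lambda=v-2k+\lambda$. Therefore $(G,\bar{\cal B})$ is a $(v,v-k,v-2k+\lambda)$-SBIBD, as claimed.
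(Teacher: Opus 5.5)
Your proof is correct. It is the classical argument: the double count giving $r=k$, Ryser's incidence-matrix identity $N^TN=N^{-1}(NN^T)N=(k-\lambda)I+\lambda J$ for block intersections, and inclusion--exclusion for the complement. The paper gives no proof of its own and only cites \cite{CDS1999}; your only tacit assumption is $k\ge 2$ (to divide by $k-1$ in item 1), which the paper also assumes implicitly in its formula $b=\lambda v(v-1)/(k(k-1))$.
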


The SBIBDs with known parameters could have different point sets $G=\{g_1,g_2,\ldots,g_v\}.$
However, they all can be transformed into isomorphic SBIBDs with the point set $[v]=\{1,2,\ldots,v\}$ by the natural one-to-one correspondence $g_i\mapsto i$ for $1\le i\le v$.
Recall the definition of the support set of a binary vector in Section \ref{sec:2}.
Then a binary code ${\cal C}$ can correspond naturally to a collection of subsets of $G$, referred to as the support set of ${\cal C}$ in $G$. 
Based on this correspondence, several families of binary nonlinear codes were constructed from SBIBDs in \cite{ZWC2025}.

\begin{construction}\cite[Construction 1]{ZWC2025}\label{con}
Let $(G,{\cal B})$ be a $(v,k,\lambda)$-SBIBD with $v\geq 2k$. Then
\begin{enumerate}
	\item [\rm 1)] a $(v,v)$ binary code ${\cal C}_1$ is obtained with the support set in $G$ as ${\cal A}_1={\cal B}$;
	\item [\rm 2)] a $(v,2v)$ binary code ${\cal C}_2$ is obtained with the support set in $G$ as ${\cal A}_2={\cal B}\cup \bar{\cal B}$;
	\item [\rm 3)] a $(v+1,2v)$ binary code ${\cal C}_3$ is obtained with the support set in $G\cup\{\infty\}$ as
	\[{\cal A}_3=\big\{B\cup\{\infty\}: B\in{\cal B}\big\}\cup \bar{\cal B},~{\textrm{here}~\infty\notin G};\]
\item [\rm 4)] a $(v-1,v)$ binary code ${\cal C}_4$ is obtained with support set in $G\setminus\{a\}$ as
	\[{\cal A}_4=\big\{B\setminus\{a\}: B\in{\cal B},a\in B\big\}\cup \big\{\bar{B}\setminus\{a\}: B\in{\cal B},a\notin B \big\},~{\textrm{here}~a\in G.}\]
\end{enumerate}	
\end{construction}

The following theorem completely determines the asymmetric Hamming bidistance distributions of these codes.
\begin{theorem}\label{thm:SBIBD}
For the four families of binary nonlinear codes ${\cal C}_i,~(i=1,2,3,4)$ obtained via Construction \ref{con}, 
the complete asymmetric Hamming bidistance distributions are explicitly provided in Table \ref{tab:SBIBD}, employing the same notation as in the construction.
\begin{table}[!h]
\centering
\caption{The asymmetric Hamming bidistance  distributions of ${\cal C}_1, {\cal C}_2, {\cal C}_3,{\cal C}_4$ in Construction \ref{con} }
\label{tab:SBIBD}
\resizebox{0.9\linewidth}{!}{
\renewcommand{\arraystretch}{1.25}
\begin{tabular}{|c|c|c|c|c|c|}
\hline
Code                          & Asymmetric Hamming bidistance               & Frequency                 & Code                          & Asymmetric Hamming bidistance & Frequency             \\ \hline
\multirow{3}{*}{${\cal C}_1$} & \multirow{3}{*}{$(k-\lambda,k-\lambda)$} & \multirow{3}{*}{$v(v-1)$} & \multirow{3}{*}{${\cal C}_4$} & $(k-\lambda,k-\lambda)$    & $k(k-1)+(v-k)(v-k-1)$ \\ \cline{5-6} 
                              &                                          &                           &                               & $(\lambda,v-2k+\lambda)$   & $k(v-k)$              \\ \cline{5-6} 
                              &                                          &                           &                               & $(v-2k+\lambda,\lambda)$   & $k(v-k)$              \\ \hline
\multirow{5}{*}{${\cal C}_2$} & $(k-\lambda,k-\lambda)$                  & $2v(v-1)$                 & \multirow{5}{*}{${\cal C}_3$} & $(k-\lambda,k-\lambda)$    & $2v(v-1)$             \\ \cline{2-3} \cline{5-6} 
                              & $(k,v-k)$                                & $v$                       &                               & $(k+1,v-k)$                & $v$                   \\ \cline{2-3} \cline{5-6} 
                              & $(v-k,k)$                                & $v$                       &                               & $(v-k,k+1)$                & $v$                   \\ \cline{2-3} \cline{5-6} 
                              & $(\lambda,v-2k+\lambda)$                 & $v(v-1)$                  &                               & $(\lambda+1,v-2k+\lambda)$ & $v(v-1)$              \\ \cline{2-3} \cline{5-6} 
                              & $(v-2k+\lambda,\lambda)$                 & $v(v-1)$                  &                               & $(v-2k+\lambda,\lambda+1)$ & $v(v-1)$              \\ \hline
\end{tabular}}
\end{table}
\end{theorem}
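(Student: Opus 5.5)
The plan is to work directly with support sets. For binary vectors ${\bf x},{\bf y}$ with supports $X,Y$ we have $d_{10}({\bf x},{\bf y})=|X\setminus Y|=|X|-|X\cap Y|$ and $d_{01}({\bf x},{\bf y})=|Y|-|X\cap Y|$. So for each code it suffices to classify the ordered pairs of distinct supports by the intersection sizes, which Lemma \ref{SBIBD-lem} supplies. The frequencies then come from counting ordered pairs of each type, and a final check confirms that they sum to $|{\cal C}_i|(|{\cal C}_i|-1)$.

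For ${\cal C}_1$, any two distinct blocks satisfy $|B\cap B'|=\lambda$ and $|B|=k$. This gives $(k-\lambda,k-\lambda)$ for all $v(v-1)$ ordered pairs.

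For ${\cal C}_2$ there are four types of pairs.
\begin{itemize}
\item Block--block: as for ${\cal C}_1$.
\item Complement--complement: the complement design has block size $v-k$ and intersection $v-2k+\lambda$ (Lemma \ref{SBIBD-lem}, part 3). The differences are then $(v-k)-(v-2k+\lambda)=k-\lambda$ in each coordinate. Together with the block--block pairs this gives $2v(v-1)$ pairs at $(k-\lambda,k-\lambda)$.
\item $B$ against its own complement $\bar B$: the sets are disjoint, giving $(k,v-k)$ and $(v-k,k)$, each $v$ times.
\item $B$ against $\bar{B'}$ with $B'\neq B$: here $|B\cap\bar{B'}|=k-\lambda$. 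This gives $(\lambda,(v-k)-(k-\lambda))=(\lambda,v-2k+\lambda)$ for $v(v-1)$ pairs, and the reverse order gives the transposed pair.
\end{itemize}
The total is $4v(v-1)+2v=2v(2v-1)$, as required.

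For ${\cal C}_3$, adjoining $\infty$ to every block increases by one the size of each $B\cup\{\infty\}$. Intersections between two augmented blocks also increase by one, while intersections with any $\bar B$ are unchanged since $\infty\notin\bar B$. Recomputing $|X|-|X\cap Y|$ then shifts exactly the coordinate belonging to the augmented block in the mixed pairs. Pairs of the same kind are unchanged, since both the weight and the intersection increase by one. This yields the ${\cal C}_3$ column.

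For ${\cal C}_4$, let $a\in G$. By Lemma \ref{SBIBD-lem}, part 1, the point $a$ lies in exactly $k$ blocks. The code therefore consists of $k$ sets $B\setminus\{a\}$ of size $k-1$ and $v-k$ sets $\bar B\setminus\{a\}$ of size $v-k-1$; these come from distinct blocks, so no set is paired with its own complement. I would count the three types of pairs as follows.
\begin{itemize}
\item Two truncated blocks containing $a$: the intersection is $\lambda-1$, so the differences are $(k-1)-(\lambda-1)=k-\lambda$ in each coordinate.
\item Two truncated complements with $a\notin B,B'$: $a$ lies in $\bar B\cap\bar{B'}$, so the intersection is $v-2k+\lambda-1$. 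The differences are $(v-k-1)-(v-2k+\lambda-1)=k-\lambda$. Together these give $k(k-1)+(v-k)(v-k-1)$ pairs at $(k-\lambda,k-\lambda)$.
\item A mixed pair $B\ni a$ and $\bar{B'}$ with $a\notin B'$: we have $a\notin\bar{B'}$, so $|(B\setminus\{a\})\cap\bar{B'}|=k-\lambda$. This gives $((k-1)-(k-\lambda),(v-k-1)-(k-\lambda))=(\lambda-1,v-2k+\lambda-1)$.
\end{itemize}

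The mixed ${\cal C}_4$ pairs are where I expect the main obstacle. The table lists $(\lambda,v-2k+\lambda)$, which differs by one in each coordinate from my count. I would re-check the convention for whether $a$ is removed from $\bar B$ (that is, whether the coordinate $a$ is deleted in both families) against \cite{ZWC2025}, and adjust the bookkeeping until it agrees. The counts $k(v-k)$ in each order are unaffected by this shift.
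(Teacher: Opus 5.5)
Your treatment of ${\cal C}_1$, ${\cal C}_2$, ${\cal C}_3$ and of the two same-type classes in ${\cal C}_4$ is correct, and it is the same case analysis via Lemma \ref{SBIBD-lem} that the paper uses. The gap is in the mixed ${\cal C}_4$ case. From $a\notin B'$ you concluded $a\notin\bar{B'}$, but the opposite holds: $a\in\bar{B'}$. Hence $a\in B\cap\bar{B'}$, and deleting it gives $|(B\setminus\{a\})\cap(\bar{B'}\setminus\{a\})|=k-\lambda-1$, so
\[
d_{10}=(k-1)-(k-\lambda-1)=\lambda,\qquad d_{01}=(v-k-1)-(k-\lambda-1)=v-2k+\lambda .
\]
This is exactly the entry $(\lambda,v-2k+\lambda)$ in Table~\ref{tab:SBIBD}, and the reverse order gives $(v-2k+\lambda,\lambda)$. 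So there is no convention to re-check against \cite{ZWC2025}. As written, your argument ends with a claimed mismatch rather than a proof of the ${\cal C}_4$ row, and ``adjust the bookkeeping until it agrees'' is not an argument. Your final sum check could not catch the slip, since it tests the counts and not the values. Computing $X\setminus Y$ directly would have caught it: $(B\setminus\{a\})\setminus\bar{B'}=B\cap B'$, because $a\notin B'$, and this set has size $\lambda$.

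There is also a cleaner way to get the whole ${\cal C}_4$ row. A block contains $a$ if and only if its complement does not. So ${\cal A}_4$ is obtained by taking the $v$ supports in ${\cal A}_2$ that contain $a$ and then deleting $a$. All these vectors have a $1$ in coordinate $a$, so that coordinate contributes to neither $d_{10}$ nor $d_{01}$. The bidistances in ${\cal C}_4$ are therefore exactly those of ${\cal C}_2$ restricted to these $v$ codewords. No block occurs together with its own complement, so the classes $(k,v-k)$ and $(v-k,k)$ disappear. The ${\cal C}_4$ row then follows from the ${\cal C}_2$ row, with counts $k(k-1)+(v-k)(v-k-1)$ and $k(v-k)$.
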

\begin{proof}
Let  ${\bf x},{\bf y}\in{\cal C}_i$ be two distinct codewords, and let $B_{\bf x},B_{\bf y}\subseteq X$ be their corresponding supports as defined in Construction \ref{con}, 
where $X=G$ for ${\cal C}_1,{\cal C}_2$ and $X=G\cup\{\infty\}$ or $G\setminus\{a\}$ for ${\cal C}_3,{\cal C}_4$, respectively. 
From the definition of directional distances, we have
\begin{align*}
	d_{10}({\bf x},{\bf y})=|B_{\bf x}\cap \bar{B}_{\bf y}|,~~
	d_{01}({\bf x},{\bf y})=|\bar{B}_{\bf x}\cap B_{\bf y}|,
\end{align*}
where $\bar{B}$ denotes the complement of $B$ in the underlying point set $X$.

By Construction \ref{con}, the support sets of codewords in each ${\cal C}_i$ are specific families of subsets derived from the blocks of a $(v,k,\lambda)$-SBIBD $(G,{\cal B})$ and their complements. 
Applying Lemma \ref{SBIBD-lem} to these subsets yields explicit values for the intersection sizes $|B_{\bf x}\cap B_{\bf y}|$,
$|B_{\bf x}\cap \bar{B}_{\bf y}|$ and $|\bar{B}_{\bf x}\cap B_{\bf y}|$ depending on the relationships between the corresponding blocks.

A case-by-case analysis based on the four constructions leads to the complete asymmetric Hamming bidistance distributions summarized in Table \ref{tab:SBIBD}. 
The detailed enumeration follows directly from the parameters of the underlying SBIBD and the combinatorial identities in Lemma \ref{SBIBD-lem}.
\end{proof}

\section{Conclusion}\label{sec:7}

In this paper, we introduced the concept of asymmetric Hamming bidistance (AHB) and its two-dimensional distribution as a refined characterization tool for binary codes operating over asymmetric channels. 
This notion addresses a key limitation of existing discrepancy-based measures \cite{GR2022},
which fail to distinguish codes that share identical weight distributions and minimum (symmetric) discrepancies but exhibit different decoding performance under maximum-likelihood decoding (MLD).

We first established the relationship between the AHB distribution and the average decoding error probability, 
and derived a new upper bound that is generally incomparable with the two known bounds from \cite{GR2022}. 
This bound offers enhanced discriminative power in scenarios where conventional measures prove insufficient.

To demonstrate the computability of the asymmetric bidistance in several classical code families, we computed their complete AHB distributions. 
Using strongly regular graphs, we determined these distributions for binary two-weight projective codes (excluding the zero codeword). 
By employing 3-class association schemes, we extended the analysis to specific binary three-weight projective codes under the same exclusion. 
Furthermore, utilizing the properties of symmetric balanced incomplete block designs (SBIBDs), we obtained the AHB distributions for several classes of binary nonlinear codes constructed in \cite{ZWC2025}.

These results not only validate the theoretical framework established in this work but also contribute to a more accurate performance analysis of such codes in binary asymmetric channels.
Future work may explore the application of AHB distributions to more general code families and investigate their potential in code design optimized for binary asymmetric channels.


\begin{thebibliography}{1}
\bibliographystyle{IEEEtran}

\bibitem{book1968} E. R. Berlekamp, Algebraic Coding Theory, New York: McGraw-Hill, pp. 397-399, 1968.
\bibitem{CDS1999} T. Beth, D. Jungnickel and H. Lenz, Design theory. Cambridge, U.K.: Cambridge University Press, 1999.
\bibitem{CK1986} R. Calderbank and W. M. Kantor, ``The geometry of two-weight codes",  \textit{Bull. Lond. Math. Soc.}, vol. 18, pp. 97–122, 1986.
\bibitem{CGO1999} P. Cappelletti, C. Golla, P. Olivo, and E. Zanoni, ``Flash Memories," Boston, MA, USA: Kluwer, 1999.
\bibitem{CSB2010} Y. Cassuto, M. Schwartz, V. Bohossian, and J. Bruck, ``Codes for asymmetric limited-magnitude errors with application to multilevel flash memories,"
\textit{IEEE Trans. Inf. Theory}, vol. 56, no. 4, pp. 1582–1595, 2010.
\bibitem{CR1979} S. D. Constantin, T. R. N. Rao, ``On the theory of binary asymmetric error correcting codes," \textit{Inf. Control}, vol. 40, no. 1, pp. 20-36, 1979.
\bibitem{GR2022} G. Cotardo and A. Ravagnani, ``Parameters of codes for the binary asymmetric channel," \textit{IEEE Trans. Inf. Theory,} vol. 68, no. 5, pp. 2941-2950, 2022.
\bibitem{CIM2013} C. Curto, V. Itskov, K. Morrison, Z. Roth, and J. L. Walker, ``Combinatorial neural codes from a mathematical coding theory perspective,"
\textit{Neural Comput.}, vol. 25, no. 7, pp. 1891–1925, 2013.
\bibitem{Delsarte1973} P. Delsarte, ``An algebraic approach to the association schemes of coding theory," \textit{Philips Res. Repts.}, 1973.
\bibitem{Delsarte1972} P. Delsarte, ``Weights of linear codes and strongly regular normed spaces," \textit{Discrete Mathematics}, vol. 3, pp. 47-64, 1972.
\bibitem{DCC2006} A. Faldum, J. Lafuente, G. Ochoa and W. Willems, ``Error probabilities for bounded distance decoding," \textit{Des., Codes Crypt.,} vol. 40, pp. 237–252, 2006. 
\bibitem{GD2012} R. Gabrys and L. Dolecek, ``Coding for the binary asymmetric channel," 2012 International Conference on Computing, Networking and Communications (ICNC), Maui, HI, USA, pp. 461-465, 2012.
\bibitem{TK1981} T. Kløve, ``Error correcting codes for the asymmetric channel," Dept. Inform., Univ. Bergen, Bergen, Norway, Tech. Rep., 1981.
\bibitem{optical2007} H.S. Mruthyunjaya, ``Performance improvement of all optical WDM systems on binary asymmetric channel," \textit{Int. J. Microw. Opt. Technol.}, vol. 2, no. 3, pp. 230-235, 2007.
\bibitem{MGK1998} I. S. Moskowitz, S. J. Greenwald, and M. H. Kang, ``An analysis of the timed Z-channel," \textit{IEEE Trans. Inf. Theory,} vol. 44, no. 7, pp. 3162–3168, 1998.
\bibitem{rep2010} S. M. Moser, P.-N. Chen, H.-Y. Lin, ``Error probability analysis of binary asymmetric channels," National Chiao Tung University, Tech. Rep., 2010.
\bibitem{P1994} G. Poltyrev, ``Bounds on the decoding error probability of binary linear codes via their spectra," \textit{IEEE Trans. Inf. Theory,} vol. 40, no. 4, pp. 1284-1292, 1994.
\bibitem{P2016} A. Poplawski, ``On matched metric and channel problem," 2016, arXiv:1606.02763.
\bibitem{QCR2018} C. Qureshi, S. I. R. Costa, C. B. Rodrigues and M. Firer. ``On equivalence of binary asymmetric channels regarding the maximum likelihood decoding," 
\textit{IEEE Trans. Inf. Theory}, pp. 3528-3537, 2018.
\bibitem{Q2019} C. M. Qureshi, ``Matched metrics to the binary asymmetric channels," \textit{IEEE Trans. Inf. Theory,} vol. 65, no. 2, pp. 1106–1112, 2019.
\bibitem{S1955} R. Silverman, ``On binary channels and their cascades," \textit{IRE Trans. Inf. Theory,}, vol. 1, no. 3, pp. 19-27, 1955.
\bibitem{SSL2026} Z. Sun, X. Song and Y. Li, ``Constructions of combinatorial neural codes with asymmetric discrepancy," \textit{IEEE Trans. Inf. Theory,} DOI 10.1109/TIT.2026.3669230, 2026.
\bibitem{XL2005} C. Xing and J. Ling, ``A construction of binary constant-weight codes from algebraic curves over finite fields," \textit{IEEE Trans. Inf. Theory,} vol. 51, no. 10  pp. 3674-3678, 2005.
\bibitem{ZJF2023} A. Zhang, X. Jing and K. Feng, ``Optimal combinatorial neural codes with matched metric $\delta_r$: characterization and constructions," 
\textit{IEEE Trans. Inf. Theory,} vol. 69, pp. 5440–5448, 2023.
\bibitem{ZWC2025} X. Zheng, S. Wang and C. Fan, ``Optimal combinatorial neural codes via symmetric designs," \textit{Des. Codes Cryptogr.}, vol. 93, pp. 725–736, 2025. 
\end{thebibliography}
\end{document}